\documentclass{vldb}

\setcounter{topnumber}{2}
\setcounter{bottomnumber}{2}
\setcounter{totalnumber}{4}     
\setcounter{dbltopnumber}{2}    

\usepackage{float} 
\usepackage[colorlinks=false, pdfborder={0 0 0}, pdfpagelabels=false]{hyperref} 
\usepackage{flushend}
\usepackage{tabularx}

%
%
\hypersetup{
pdfinfo={
    Title={},
    Subject={},
    Author={},
    Creator={},
    Producer={},
    CreationDate={ },
    ModDate={ },
} }

%
%
\usepackage{amssymb}
\usepackage{relsize}
\usepackage{lmodern} 
\selectfont
\usepackage{xfrac}

\usepackage{tikz}
\usetikzlibrary{arrows,trees,positioning,shapes.geometric}
\usetikzlibrary{plotmarks}
\definecolor{darkpink}{rgb}{0.91, 0.33, 0.5}
\definecolor{bronze}{rgb}{0.8, 0.5, 0.2}
\tikzset {
edgeannot/.style={auto,minimum size=4mm},
cluster/.style={draw,circle,thick},
edge/.style={thick},
special/.style={draw,shape=regular polygon,regular polygon sides=3,thick},
stext/.style={minimum size=0,inner sep=1mm},
}
\usepackage[algo2e,vlined,ruled,linesnumbered]{algorithm2e}
\SetEndCharOfAlgoLine{}
\SetCommentSty{textit}

\definecolor{algoshade}{HTML}{dddddd}
\def\HiLi{\leavevmode\rlap{\hbox to \hsize{\color{algoshade}\leaders\hrule height .8\baselineskip depth .5ex\hfill}}}

\newlength{\pullupfigure}
\setlength{\pullupfigure}{-0.5ex}

\newlength{\pullupcaption}
\setlength{\pullupcaption}{-1ex}

\usepackage{pgfplots}
\usepackage{pgfplotstable}
\pgfplotsset{compat=1.8}

\pgfplotsset {
graph/.style={
height=46mm,
width=85mm,
legend style={draw=none, fill=none, font=\scriptsize},
  ticklabel style={font=\tiny},
xlabel near ticks,
ylabel near ticks,
scaled ticks=false,
},
graphright/.style={
graph,
},
smallgraph/.style={
graph,
height=48mm,
width=58mm,
},
}

%
%

\newcommand{\PP}{\mathbf{P}}
\newcommand{\EE}{\mathbf{E}}

\newcommand{\K}{\mathbf{K}}

\newcommand{\fscore}[1]{\ensuremath{\texttt{score}(#1)}}

\newcommand{\fcost}[1]{\ensuremath{\texttt{cost}(#1)}}

\newcommand{\fcat}[1]{\ensuremath{\texttt{cat}(#1)}}
\newcommand{\pot}[1]{\ensuremath{\texttt{pot}(#1)}}
\newcommand{\extra}[1]{\ensuremath{\texttt{extra}(#1)}}

\newcommand{\pp}{\smaller{\raise.13em\hbox{\text{++}}}}
\newcommand{\lmax}{\ensuremath{l_{\textrm{max}}}}

\newcommand{\tmax}{\ensuremath{t_{\textrm{max}}}}

\newcommand{\edgecost}[2]{\ensuremath{c(#1,#2)}}
\newcommand{\cost}[1]{\ensuremath{c(#1)}}

\newcommand\ifmonospace{\ifdim\fontdimen3\font=0pt }
\newcommand\Cpp{%
\ifmonospace%
    C++%
\else%
    C\kern-.1667em\raise.30ex\hbox{\smaller{++}}%
\fi%
\spacefactor1000 }

\makeatletter
\newcommand*{\rom}[1]{\ensuremath{\mathsmaller{\expandafter\@slowromancap\romannumeral #1@}}}
\makeatother

\newtheorem{definition}{Definition}
\newtheorem{example}{Example}
\newtheorem{lemma}{Lemma}

\begin{document}

\title{Solving Orienteering with Category Constraints\\Using Prioritized Search}

\numberofauthors{2}

\author{
\alignauthor
Paolo Bolzoni\\
        \affaddr{Free University of Bozen-Bolzano}\\
        \affaddr{Faculty of Computer Science}\\
        \affaddr{Piazza Domenicani 3}\\
        \affaddr{39100 Bolzano; Italy}\\
        \email{paolo.bolzoni@stud-inf.unibz.it}
\alignauthor
Sven Helmer\\
        \affaddr{Free University of Bozen-Bolzano}\\
        \affaddr{Faculty of Computer Science}\\
        \affaddr{Piazza Domenicani 3}\\
        \affaddr{39100 Bolzano; Italy}\\
        \email{shelmer@inf.unibz.it}
}
\additionalauthors{}

\maketitle

\begin{abstract}
We develop an approach for solving rooted orienteering problems with category
constraints as found in tourist trip planning and logistics. It is based on
expanding partial solutions in a systematic way, prioritizing promising ones,
which reduces the search space we have to traverse during the search. The
category constraints help in reducing the space we have to explore even
further. We implement an algorithm that computes the optimal solution and also
illustrate how our approach can be turned into an approximation algorithm,
yielding much faster run times and guaranteeing lower bounds on the quality of
the solution found. We demonstrate the effectiveness of our algorithms by
comparing them to the state-of-the-art approach and an optimal algorithm based on
dynamic programming, showing that our technique clearly outperforms these
methods.
\end{abstract}

\section{Introduction}

Imagine a user arriving at the train station of a city, depicted on the
left-hand side of Figure~\ref{fig:ex2}, and needing to be at the airport,
located on the right-hand side of the figure, five hours later. They do not
want to immediately go to the airport, but have a look at the city first. If
the user takes the shortest route from the train station to the airport,
represented by the solid line, they are only able to see a basilica on the way
and still have a lot of spare time when arriving at the airport. According to
the ratings of a tourist guide, the basilica, a pagoda, and a cathedral are the
top points of interest (POIs) that can be visited en route to the airport
within five hours (see dashed line). While this makes much better use of the
time, the user may not be in the mood to visit these POIs, as they may be tired
from traveling and would like a more relaxing route. In this instance, the
dotted line, connecting a park, a ferris wheel, and a statue, is much more
appropriate. As this example shows, the goal is to find a trip with the best
points of interest that makes good use of the available time while at the same
time considering the preferences of a user.

\begin{figure}[hbt]
\centering
\begin{tikzpicture}
[every node/.style={circle},
on grid=true,
inner sep=0pt,
x=1.6cm,y=1.6cm]
\tikzset {
}


\pgfdeclareimage[height=1cm]{start}      {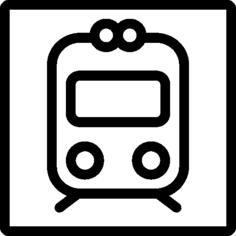}
\pgfdeclareimage[height=1cm]{destination}{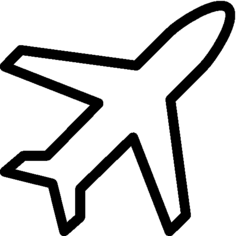}

\pgfdeclareimage[height=1cm]{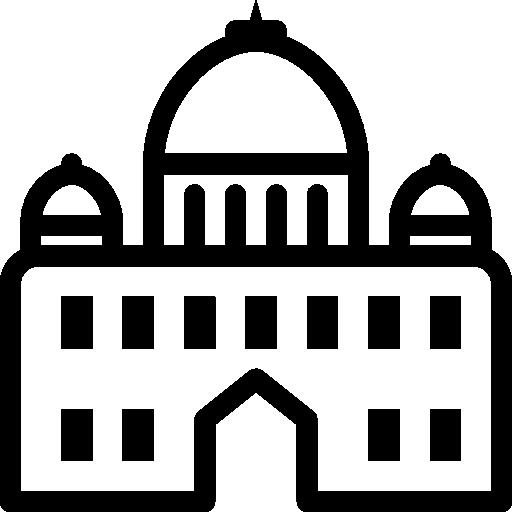}      {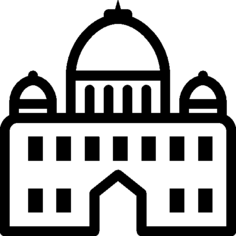}
\pgfdeclareimage[height=1cm]{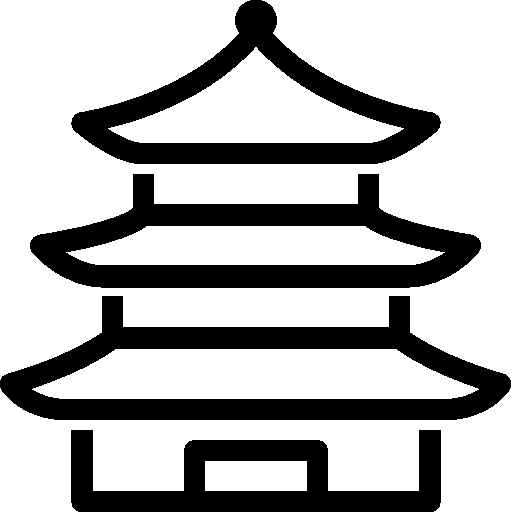}      {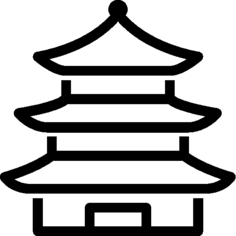}
\pgfdeclareimage[height=1cm]{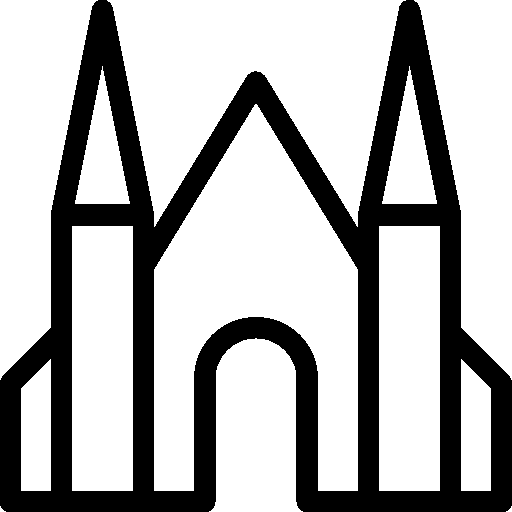}      {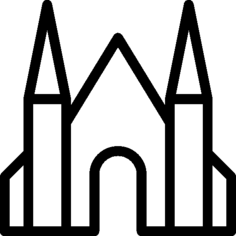}

\pgfdeclareimage[height=1cm]{user1}      {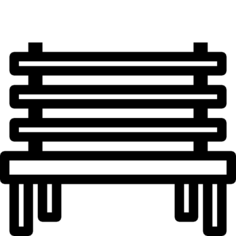}
\pgfdeclareimage[height=1cm]{user2}      {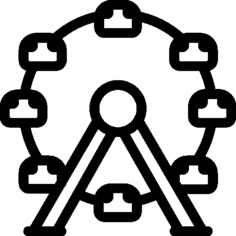}
\pgfdeclareimage[height=1cm]{user3}      {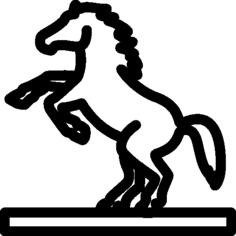}

\pgfdeclareimage[height=1cm]{unused1}      {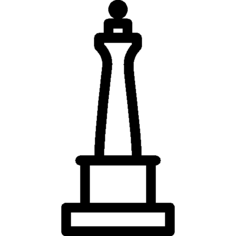}
\pgfdeclareimage[height=1cm]{unused2}      {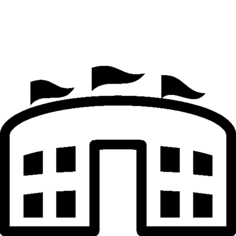}

\node (start) at (0,0) {\pgfuseimage{start}};
\node (destination) at (3,0) {\pgfuseimage{destination}};

\node (best1) at (1.2,0) {\pgfuseimage{best1}};
\node (best2) at (1.8,-1.1) {\pgfuseimage{best2}};
\node (best3) at (3,-1.1) {\pgfuseimage{best3}};

\node (user1) at (1,1.7) {\pgfuseimage{user1}};
\node (user2) at (1.9,1.1) {\pgfuseimage{user2}};
\node (user3) at (3,1.5)  {\pgfuseimage{user3}};

\node (unused1) at (0.7,0.8) {\pgfuseimage{unused1}};
\node (unused2) at (0.2,-0.7) {\pgfuseimage{unused2}};

\draw[very thick] (start.5) -- (best1.175) (best1) -- (destination);
\draw[very thick, dotted] (start) |- (user1) -| (user2.120) (user2.22) |- (user3) -- (destination);
\draw[very thick, densely dashed] (start.355) -- (best1.185) (best1) |-  (best2) -- (best3) -- (destination);

\end{tikzpicture}
\vspace*{\pullupcaption}
\caption{An example with three paths from the subway to the airport: fastest
(solid), suggested by guides (dashed), best for user (dotted).}
\label{fig:ex2}
\vspace*{\pullupfigure}
\end{figure}

Finding itineraries for domains such as tourist trip planning and logistics
often involves solving an orienteering problem. This is because those tasks are
not about determining the shortest path, but the most attractive or the one
covering the most needy customers while satisfying a strict time constraint. We
focus on a variant that assumes that every point of interest (POI) or customer
has a category. This categorization helps a user in expressing preferences,
e.g. a tourist may only be interested in certain types of venues, such as
museums, galleries, and cafes, while certain vehicles may only be able to
serve particular customers.

In general, orienteering is an NP-hard problem, and adding categories does not
change this fact \cite{BHW14}.  We propose an approach based on a best-first
strategy to explore the search space, meaning that we first expand the partial
solutions that show the greatest potential.
We do so with the help of a function approximating the attractiveness of POIs
that can still be added.  Similar to admissible heuristics in an A$^*$-search,
this function needs to satisfy certain properties.
Additionally, we are able to prune partial solutions that cannot possibly
result in an optimal route.

Even though this technique will speed up the search for an optimal solution by
pruning unpromising partial solutions, in the worst case it still has an
exponential run time. Therefore, we describe how to transform our method into
much more efficient approximation algorithms, creating different variants with
important properties concerning the quality of the generated solution and the
run time. In summary, we make the following contributions:

\begin{itemize}

\item We show how to apply a best-first strategy to the problem of
  orienteering with category constraints.

\item Additionally, we turn the optimal algorithm into different approximation algorithms
  proving lower bounds for the quality of a solution and upper bounds for the
  run time.

\item In an experimental evaluation we compare our approach to
  state-of-the-art algorithms, demonstrating its effectiveness. For some
  scenarios we improved the quality of the solutions by about 10\% with a run
  time of just five seconds (with the main competitor taking up to 50 seconds
  and producing worse solutions).

\end{itemize}

The remainder of the paper is organized as follows.  In the next section we
cover related work and in Section~\ref{sec:formal} we formalize the problem.
Sections~\ref{sec:algorithm} and \ref{sec:approx} contain a detailed
description of our approach and an approximation algorithm, respectively,
while in Section~\ref{sec:theobounds} theoretical bounds are provided. This is
followed by an experimental evaluation, comparing our algorithm to the
state-of-the-art algorithm, a simple greedy heuristic, and an optimal
one. Finally, Section~\ref{sec:concl} concludes the paper.

\section{Related Work}
\label{sec:relwork}

A common technique for modeling tourist trip planning~\cite{GKMP14} and
logistics~\cite{Dunn92} problems is to map them to the {\em orienteering
  problem} (OP).  Introduced by Tsiligrides in~\cite{Tsiligrides84}, OP is
about determining a path from a starting node to an ending node in an
edge-weighted graph with a score for each node, maximizing the total score
while staying within a certain time budget. Orienteering is an NP-hard problem
and algorithms computing exact solutions using branch and
bound~\cite{GeLaSe98b,RaYoKa92} as well as dynamic programming
techniques~\cite{LuLiTs11,RiSa09} are of limited use, as they can only solve
small problem instances.  Consequently, there is a body of work on
approximation algorithms and heuristics, most of them employing a two-step
approach of partial path construction~\cite{Kell89,Tsiligrides84} and
(partial) path improvement~\cite{BCKLMM07,ChKoPa08,SeSe06}. Meta-heuristics,
such as genetic algorithms~\cite{TaSm00}, neural networks~\cite{WSGJ95}, and
ant colony optimization~\cite{LiKuSm02} have also been tested. However, none
of the approaches investigate OP generalized with categories.  For a recent
overview on orienteering algorithms, which still mentions orienteering with
category constraints as an open problem, see~\cite{GKMP14}.

There is also work on planning and optimizing errands, e.g., someone wants to
drop by an ATM, a gas station, and a pharmacy on the way home. The generalized
traveling salesman version minimizes the time spent on this trip
\cite{RiTs13}, while the generalized orienteering version maximizes the number
of visited points of interest (POIs) given a fixed time budget. However, as
there are no scores, no trade-offs between scores and distances are
considered.

Another, different take on tourist trip planning is to look at the user
experience. For instance, Maruyama et al. investigate feedback elicitation
from users in a tourist trip planning context \cite{MSMY04b}, while Castillo et
al. look at context-based recommendation and collaborative filtering
\cite{samap08}. However, most of these approaches do not consider the actual
route planning or use very simple greedy heuristics.

Adapting an existing algorithm for OP would be a natural starting point for
developing an approximation algorithm considering categories.  However, many
of the existing algorithms have a high-order polynomial complexity or no
implementation exists, due to their very complicated structure. Two of the
most promising approaches we found were the segment-partition-based technique
by Blum et al.~\cite{BCKLMM07} and the method by Chekuri and P\'{a}l,
exploiting properties of submodular functions~\cite{ChPa05}. The latter
approach, a quasi-polynomial algorithm, is still too slow for practical
purposes. Additionally, common to all of the approaches, though, is breaking
down the itinerary recursively into smaller and smaller segments, which get
assembled into a complete tour. If we just run these algorithms without any
alterations on POIs with categories, it is very likely that the solution
violates the max-n type constraints. A fix would be to try out all possible
distributions of max-n type constraints for every recursive call. For example,
given a max-n type constraint of 3 for a category and assuming \texttt{recl}
computes the left half of a route and \texttt{recr} the right one, we would
have to make the following calls with max-n type constraints:
\texttt{recl}(0), \texttt{recr}(3); \texttt{recl}(1), \texttt{recr}(2);
\texttt{recl}(2), \texttt{recr}(1); \texttt{recl}(3), \texttt{recr}(0). While
this would guarantee an answer respecting the max-n type constraints, it would
also blow up the computational costs.

Nevertheless, Singh et al. \cite{SKGKB07} modified the algorithm of Chekuri
and P\'{a}l by introducing spatial decomposition for Euclidean spaces in the
form of a grid, making it more efficient.  Our work on CLuster Itinerary
Planning (CLIP) \cite{BHW14} for OPs on road networks with category
constraints was inspired by~\cite{SKGKB07}.

\section{Problem Formalization}
\label{sec:formal}

We assume a set of points of interest (POIs) $p_i, 1\leq i\leq n$, represented
by $\PP$.  The POIs, together with a starting and a destination node, denoted
by $s$ and $d$, respectively,  are connected by a complete, metric, weighted,
undirected graph $G=(\PP\cup\{s,d\},\EE)$, whose edges, $e_l \in \EE = \{(x,y)
\mid x,y \in \PP \cup\{s,d\}\}$ connect them.  Each edge $e_l$ has a cost
$\edgecost{p_i}{p_j}$ that signifies the duration of the trip from $p_i$ to
$p_j$, while every node  $p_i\in\PP$ has a cost $\cost{p_i}$ that denotes its
visiting time.  Each POI belongs to a certain category, such as \emph{museums,
restaurants}, or \emph{galleries}.  The set of $m$ categories is denoted by
$\K$ and each POI $p_i$ belongs to exactly one category $k_j, 1\leq j\leq m$.
Given a $p_i$, $\fcat{p_i}$ denotes the category $p_i$ belongs to and
$\fscore{p_i}$ its score or reward, with higher values indicating higher
interest to the user. Finally, users have a certain maximum time in their
budget to complete the itinerary, denoted by \tmax{}.

\begin{definition}\label{def1}
  \emph{(Itinerary)} An itinerary $\mathcal{I}$ starts from a starting point $s$ and
  finishes at a destination point $d$ ($s$ and $d$ can be identical).
  It includes an ordered sequence of connected nodes $\mathcal{I}=\langle s,
  p_{i_1},$ $p_{i_2}, \ldots,$ $p_{i_q}, d\rangle$, each of which is visited once. We
  define the \emph{cost} of itinerary $\mathcal{I}$ to be the total duration of
  the path from $s$ to $d$ passing through and visiting the POIs in
  $\mathcal{I}$, $\fcost{\mathcal{I}}=\edgecost{s}{p_{i_1}} + \cost{p_{i_1}} +
  \sum_{j=2}^{q}(\edgecost{p_{i_{j-1}}}{p_{i_j}}+\cost{p_{i_j}})+\edgecost{p_{i_q}}{d}$,
  and its {\emph score} to be the sum of the scores of the individual POIs
  visited, $\fscore{\mathcal{I}}=\sum_{j=1}^{q} \fscore{p_{i_j}}$.
\end{definition}

\begin{figure}[htb]
\vspace*{-.4cm}
\centering
\scalebox{.86}{
\begin{tikzpicture}
\tikzset {
inner sep=0,minimum size=9mm,
}
\node[cluster,densely dashed] (legend k1) at (-32mm, 6mm) {$p_i$};
\node[cluster] (legend k2) [below=7mm of legend k1] {$p_j$};

\node[stext,right] (legend k1 text) at (legend k1.east) {{\scriptsize $\fscore{p_i} = 0.9$}};
\node[stext] [above=-1mm of legend k1 text.north] {{\scriptsize $\fcat{p_i} = k_1$}};
\node[stext] [below=-1mm of legend k1 text.south] {{\scriptsize $i = 1,3$}};

\node[stext,right] (legend k2 text) at (legend k2.east) {{\scriptsize $\fscore{p_j} = 0.5$}};
\node[stext] [above=-1mm of legend k2 text.north] {{\scriptsize $\fcat{p_j} = k_2$}};
\node[stext] [below=-1mm of legend k2 text.south] {{\scriptsize $j = 2,4$}};

\node[special] (s) {$s$};
\node[special] (d) [right=1.7cm of s] {$d$};
\node[cluster,densely dashed] (p1) [below right=1cm of s]{$p_1$};
\node[cluster] (p2) [above right=1cm of s]{$p_2$};
\node[cluster,densely dashed] (p3) [above right=1cm of d]{$p_3$};
\node[cluster] (p4) [below right=1cm of d]{$p_4$};

\node[stext,below] at (p1.south) {$\cost{p_1} = 1$};
\node[stext,above] at (p2.north) {$\cost{p_2} = 1$};
\node[stext,above] at (p3.north) {$\cost{p_3} = 1$};
\node[stext,below] at (p4.south) {$\cost{p_4} = 1$};

\draw [edge] (s) to node[edgeannot,swap] {4} (p1);
\draw [edge] (s) to node[edgeannot] {2}  (p2);
\draw [edge] (p1) to node[edgeannot,swap] {6} (d);
\draw [edge] (p2) to node[edgeannot] {2} (p3);
\draw [edge] (p3) to node[edgeannot,swap] {3} (d);
\draw [edge] (p3) to node[edgeannot] {2} (p4);
\draw [edge] (p4) to node[edgeannot,swap] {1} (d);
\end{tikzpicture}
}

{\vspace{-2ex}}
\vspace*{\pullupcaption}
\caption{Itinerary including $n=4$ POIs}
\label{fig:ex1}
\vspace*{\pullupfigure}
\end{figure}
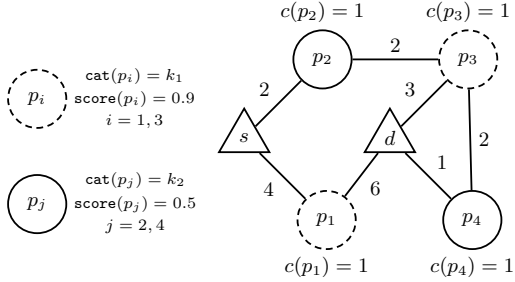

\begin{example}

\label{ex:example1} Figure~\ref{fig:ex1} shows an example with four POIs,
$p_1,p_2,p_3$, and $p_4$, along with their distances, visiting times, scores,
and categories. We simplify the graph slightly to keep it readable: all POIs of
the same category have the same score we also omit some edges.  One example
itinerary between $s$ and $d$ is the one that includes only $p_1$, {\emph
i.e.,} $\mathcal{I}_1=\langle s,p_1,d\rangle $, while a second one includes
$p_2$ and $p_3$, {\emph i.e.,} $\mathcal{I}_2=\langle s,p_2,p_3,d\rangle$, a
third itinerary can be $\mathcal{I}_3=\langle s, p_2, p_3, p_4, d \rangle$.
Their costs and scores are as follows:

\begin{itemize}
\item $\mathcal{I}_1=\langle s,p_1,d\rangle $: \newline
  $\fcost{\mathcal{I}_1}=4+1+6=11$, \newline
  $\fscore{\mathcal{I}_1}=0.9$;

\item $\mathcal{I}_2=\langle s,p_2,p_3,d\rangle $: \newline
  $\fcost{\mathcal{I}_2}=2+1+2+1+3=9$, \newline
  $\fscore{\mathcal{I}_2}=0.5+0.9=1.4$;

\item   $\mathcal{I}_3=\langle s,p_2,p_3,p_4,d\rangle $: \newline
  $\fcost{\mathcal{I}_3}=2+1+2+1+2+1+1=10$, \newline
  $\fscore{\mathcal{I}_3}=0.5+0.9+0.5=1.9$.
\end{itemize}
\end{example}

In general, given a subset $P$ of $\PP$, our goal is to place them in an
itinerary from $s$ to $d$ as defined above. Given the traveling and visiting
times as well as the scores, we need to build an itinerary with duration
smaller than \tmax{} and maximum cumulative score.  As already mentioned, we
introduce an additional constraint specifying the number of POIs per category
that can be included in the final itinerary.  More precisely, we introduce a
parameter $\max_{k_j}$ for each category $k_j$ that is set by the user to the
maximum number of POIs in a category that he or she prefers to visit during the
trip.  We are now ready to define the \emph{Orienteering Problem with Maximum
Point Categories} (\emph{OPMPC}).

\begin{definition}
  \vskip -1ex
  \label{def:opmpc}
  \emph{(OPMPC)} Given a starting point $s$, a destination point $d$, $n$ points of
  interest $p_i \in \PP$, with scores described by the function
  \fscore{p_i}, visiting times $\cost{p_i}, 1\leq i\leq n$, traveling
  times $\edgecost{x}{y}$ for $x,y \in \PP \cup \{s,d\}$, categories
  $k_j \in \K, 1\leq j\leq m$, and the following two parameters: (a)
  the maximum total time \tmax{} a user can spend on the itinerary
  and, (b) the maximum number of POIs $\max_{k_j}$ that can be used
  for the category $k_j$ $(1\leq j \leq m)$, a solution to the OPMPC
  is an itinerary $\mathcal{I}=\langle
  s,p_{i_1},p_{i_2},\ldots,p_{i_q},d\rangle $, $1 \leq q \leq n$, such
  that
\begin{itemize}
  \item the total score of the points, $\fscore{\mathcal{I}}$, is
    maximized;
  \item no more than $\max_{k_j}$ POIs are used for category $k_j$;
  \item the time constraint is met, {\emph i.e.,}
    $\fcost{\mathcal{I}}\leq \tmax{}$.
  \end{itemize}
\end{definition}

\begin{example}
  In the presence of categories $k_1$ with $\max_{k_1}=1$ and $k_2$ with
  $\max_{k_2}=1$, and assuming that $\tmax{}=10$, we can observe the following
  about the itineraries in Example~\ref{ex:example1}: Itinerary $\mathcal{I}_1$
  is infeasible since its cost is greater than \tmax{}, while the other two
  fulfill the time requirement. Comparing $\mathcal{I}_2$ and $\mathcal{I}_3$,
  we can see that $\mathcal{I}_3$ is of higher benefit to the user, even though
  it takes more time to travel between $s$ and $d$. However, it cannot be
  chosen since it contains two POIs from $k_2$. Itinerary $\mathcal{I}_2$
  contains two POIs, each from a different category and it could be one
  recommended to the user.
\end{example}

In Table~\ref{tab:symbols} we give a summary of the notation we use in this paper.

\begin{table}[ht]
\begin{tabular}{ll}
\hline
$p_i$                &  a single POI           \\
$\PP$                &  the set of all POIs \\
$n$                  &  number of POIs         \\
$k_j$                &  a single category \\
$\K$                 &  the set of all categories \\
$m$                  &  number of categories   \\
$\max_{k_j}$          &  category constraint for category $k_j$ \\
$\fcat{p_i}$         &  category of POI $p_i$  \\
$\fscore{p_i}$       &  score of POI $p_i$     \\
$\cost{p_i}$          &  cost of $p_i$ (visiting time) \\
$\mathtt{utility}(p_i)$ & utility of $p_i$ (used for greedy strategy) \\
$G$                  &  graph containing POIs as nodes \\
                     & (basically a road network) \\
$c(p_i, p_j)$        &  cost of an edge        \\
${\cal I}$           &  itinerary \\
$\tmax$              &  time constraint \\
$s$                &  starting node \\
$d$                &  destination node \\
$u_{k_j}$            &  no of POIs of category $k_j$ in the itinerary \\
$\PP_{rem(\cal I)}$  &  set of POIs $\PP$ without POIs in $\cal I$ \\
\fcost{\cal I} & total cost of $\cal I$ (traveling and visiting times) \\
\fscore{\cal I}      &  current score of an itinerary \\
\pot{\cal I}         &  potential score of an itinerary \\
                     &  (upper bound on achievable score) \\
\extra{\cal I}       &  upper bound on score of still reachable POIs \\
                     & in $\PP_{rem(\cal I)}$\\
\hline
\end{tabular}
\vspace*{\pullupcaption}
\caption{Notation}
\label{tab:symbols}
\vspace*{\pullupfigure}
\end{table}

\section{Our Approach}
\label{sec:algorithm}

We roughly follow a best-first strategy, meaning that we keep all solutions
generated so far in a priority queue sorted by their potential score.  A
solution is represented by the set of POIs it contains. In each step, we take
the solution with the highest potential score from the queue, expand it with
POIs that have not been visited yet, and re-insert the expanded solutions back
into the queue. An important difference to the classic best-first search is
that in our case it is not straightforward to identify the goal state, as we do
not know the score of the optimal itinerary a priori.  Consequently, we have to
run the algorithm until there are no solutions left in the queue that have a
higher potential score than the best found so far (which then becomes the
result we return).  Applying this approach in a straighforward fashion is not
practical in most cases as OPMPC is an intractable problem.  Therefore, after
presenting an algorithm that computes the optimal solution, we turn it into an
approximation algorithm.

\subsection{Potential Score}

An important aspect of our search strategy is the computation of the potential
score of a partial solution, which has to be an upper bound of the score of the
fully expanded solution based on this partial solution. Basically, this follows
the principle of admissible heuristics found in the A$^*$-algorithm, with the
difference that we never underestimate the value, since we consider scores and
not costs.

\begin{definition}\label{defpot}
  \emph{(Potential score)} Given a (partial) itinerary ${\cal I} = \langle s, 
p_1, p_2, \dots, p_i, d \rangle$, its potential score $\pot{\cal I}$ is defined as follows:
\begin{eqnarray*}
\pot{\cal I} = \fscore{\cal I} + \extra{\cal I}
\end{eqnarray*}
\noindent
where $\fscore{{\cal I}} = \sum_{j=1}^{i} \fscore{p_j}$ and $\extra{\cal I}$ is a heuristic
never underestimating the additional score that is still possible for $\cal I$.
\end{definition}

Algorithm~\ref{algo:extra} shows the pseudocode for computing $\extra{\cal I}$.
We obtain an approximation (without underestimating the true extra score we can
still get) by figuring out how many of the top-scoring POIs we could
potentially still fit into the partial itinerary $\cal I$.  In a first step, we
remove all POIs that are already in $\cal I$ from the set of all POIs (line
\ref{alg:rem}).  We call this set of remaining POIs $\PP_{rem(\cal I)}$. Then
we discard all POIs in $\PP_{rem(\cal I)}$ that are too costly to travel to
from the penultimate POI $p_i$ to $d$ (line \ref{alg:costly}).  Now that we
have reduced the set of POIs, we turn to the category constraints to get an
even more accurate picture (line \ref{alg:cat}). For every category we know how
many POIs we can still fit into the itinerary without violating the category
constraints. As previously defined, $\max_{k_j}$ is the maximum number of POIs
that can be chosen for category $k_j$. Let $u_{k_j}$ be the number of POIs of
category $k_j$ currently found in $\cal I$ (line \ref{alg:ukj}), then we know
that we can only fit $(\max_{k_j} - u_{k_j})$ more POIs of category $k_j$ into
$\cal I$ (line \ref{alg:maxminusukj}).  For each category, we sort the POIs in
$\PP_{rem(\cal I)}$ according to their score in descending order and add up the
scores of the top $(\max_{k_j} - u_{k_j})$ POIs of each category (lines
\ref{alg:addstart} to \ref{alg:addstop}). This comprises $\extra{\cal I}$.

\begin{algorithm2e}[htb]
\caption{ $e \gets \mathtt{Extra}(q, {\cal I}, G, \PP)$ }
\label{algo:extra}

\KwIn{query $q$ (consisting of $s$, $d$, $\tmax$, and category constraints), an itinerary ${\cal I}$,
graph $G$, set $\PP$ of all POIs} 
\KwOut{potential extra score $\extra{\cal I}$}

\SetKw{kContinue}{continue}
\SetKw{kNot}{not}
\SetKw{kAnd}{and}
\SetKw{kBreak}{break}
\SetKw{kFrom}{from}
\SetKw{kTo}{to}

$x \gets 0$ \;
$\PP_{rem(\cal I)} \gets \PP \setminus {\cal I}$ \; \label{alg:rem}
$p_i \gets \mbox{last but one node in } {\cal I}$ \;

\ForEach {$p \in \PP_{rem(\cal I)}$} { \label{alg:costly}
  \lIf {$\fcost{{\cal I}} - \edgecost{p_i}{d} + \edgecost{p_i}{p} + \cost{p} +
    \edgecost{p}{d} > \tmax$} {
    $\PP_{rem(\cal I)} \gets \PP_{rem(\cal I)} \setminus \{ p \}$ 
  }
}

\ForEach {$\mbox{category } k_j \in q$} { \label{alg:cat}
  $u_{k_j} \gets | \{ p \in {\cal I} | \fcat{p} = k_j \} |$ \; \label{alg:ukj}
  \If {$\max_{k_j} > u_{k_j}$} { \label{alg:maxminusukj} 
    $\PP_{rem(\cal I)}^{k_j} \gets \{ p \in \PP_{rem(\cal I)} | \fcat{p} =
    k_j \}$ \;
    $\mbox{sort } \PP_{rem(\cal I)}^{k_j} \mbox{ in descending order of
      scores}$ \; \label{alg:addstart}
    \For {$i\ \kFrom\ 1\ \kTo\ {\max_{k_j}-}u_{k_j} $ } {
      $p_i \gets i \mbox{th POI in } \PP_{rem(\cal I)}^{k_j}$ \;
      $x \gets x + \fscore{p_i}$ \; \label{alg:addstop}
    }
  }
}

\Return $x$
\end{algorithm2e}

\subsection{Our Algorithm}

We are now ready to describe the pseudocode in Algorithm~\ref{algo:mainloop}.
For the moment disregard the parts marked in gray: these are optimizations that
will be discussed in the following section. Partial solutions are kept in a
double-ended priority queue (also called deque). The deque is initialized with
the empty solution $e$ just containing $s$ and $d$ and a score of 0. We take
the solution with the highest potential score out of the deque and expand it,
one by one, with the POIs that have not been visited yet. In order to figure
out whether a solution violates the time constraint $\tmax$, we have to compute
the (shortest) length of the itinerary of a solution. As every partial solution
contains only a limited number of POIs, we compute this length in a brute-force
way.  We discard any expanded solutions that violate the time constraint or any
of the category constraints and put all valid expansions back into the deque.
If an expanded solution is better than the current best solution, we update it.
We continue until the potential score of the solution taken from the deque is
smaller than the best found so far (which then becomes the answer we return).

\begin{algorithm2e}[htb]
\caption{ $b \gets \mathtt{MainLoop}(q, G)$ }
\label{algo:mainloop}

\KwIn{query $q$ (consisting of $s$, $d$, $\tmax$, and category constraints), 
graph $G$, set $\PP$ of all POIs} 
\KwOut{best solution $b$}

\SetKw{kContinue}{continue}
\SetKw{kNot}{not}
\SetKw{kAnd}{and}
\SetKw{kBreak}{break}

$Q$ priority deque for partial solutions\;

$e.\mathtt{cost} \gets c(q.s,q.d)$ \;  
$e.\mathtt{extra} \gets \mathtt{Extra}(q, e, G, \PP)$ \;

\HiLi $b \gets \mathtt{ExtendGreedily}(q, e)$ \; \label{alg:firstgreedy}
$Q.\mathtt{push}(e)$ \;


\While{\kNot $Q.\mathtt{empty}()$} {
    $s \gets Q.\mathtt{popMaximum}$() \; \label{alg:timeoutexitb}
    \ForEach{$z \in \mathtt{Expand}(s, G)$ } {  \label{alg:timeoutexite}
        \lIf {$z$ violates any category constraint} {$\kContinue$}
        $z \gets \mathtt{ComputePathAndScore}(z)$  \;
        \lIf {$z.\mathtt{cost} > q.\tmax{}$} {$\kContinue$}
        $z.\mathtt{extra} \gets \mathtt{Extra}(q, z, G, \PP)$ \;
        \lIf {$z.\mathtt{score} > b.\mathtt{score}$} {$b \gets z$ }
        $Q.\mathtt{push}(z)$ \; \label{alg:insidepush}
\HiLi   \If {at an early stage} { \label{alg:earlystageb}
\HiLi       $y \gets \mathtt{ExtendGreedily}(q, z)$ \;
\HiLi    \lIf {$y.\mathtt{score} > b.\mathtt{score}$} {$b \gets y$ } \label{alg:earlystagee}
        }
    }
    \While {$b.\mathtt{score} \newline{} \phantom{iiiii} > Q.\mathtt{minimum}().\mathtt{score} + Q.\mathtt{minimum}().\mathtt{extra}$}  { \label{alg:fastexit}
        $Q.\mathtt{popMinimum}() $ \;
    }
}

\Return $b$
\end{algorithm2e}

\subsection{Further Optimizations}

With the help of the score of the current best solution, we can do some
pruning, because only partial solutions whose potential score is better have to
be considered. In order to be able to prune earlier and more aggressively, we
initialize the best solution found so far with an itinerary found quickly using
a greedy algorithm (see line~\ref{alg:firstgreedy} and
Algorithm~\ref{algo:greedyexpand}). Given a partial itinerary ${\cal I} =
\langle s, p_1, p_2, \dots, p_i, d \rangle$, the greedy strategy adds to this
path a POI $p \in \PP_{rem({\cal I})}$ such that its utility

\begin{eqnarray*}
\mathtt{utility}(p) &=& \frac{\fscore{p}}{\edgecost{p_i}{p} + \cost{p} + \edgecost{p}{d}}
\end{eqnarray*}

\noindent
is maximal and no constraints are violated.  We repeat this until no further
POIs can be added to the itinerary.

\begin{algorithm2e}[htb]

\caption{ $s \gets \mathtt{ExtendGreedily}(q, z)$ }
\label{algo:greedyexpand}

\KwIn{ query $q$, solution $z$ }
\KwOut{ solution $s$ }

\SetKw{kContinue}{continue}
\SetKw{kNot}{not}
\SetKw{kAnd}{and}
\SetKw{kBreak}{break}

$s,s' \gets z$ \;

\While {$\mathtt{length}(s') \leq q.\tmax{}$ } {
    $s \gets s', u \gets -\infty$ \;
    \ForEach {$p \in \PP_{rem(s.{\cal I})}$ } {
        \lIf {$p$ violates any category constraint of $s'$} {$\kContinue$}
        \If { $u < \mathtt{utility}(p)$ } {
            $u  \gets \mathtt{utility}(p), p' \gets p$ \;
        }
    }
    $\mathtt{AppendToEnd}(s', p') $ \;
}

\Return $s$
\end{algorithm2e}

We can also use a greedy approach to improve the current best solution while
the algorithm is running (see lines~\ref{alg:earlystageb}
to~\ref{alg:earlystagee}). When expanding a partial solution taken from the
deque, we also complete it with a greedy strategy. This improves the current
best solution more rapidly, leading to more aggressive pruning and is
especially useful in earlier stages of the algorithm, as there is still a lot
of unexplored search space.  We may want to stop the greedy expansion in later
stages, due to it being less effective, as we have already explored a lot of
promising directions.  The threshold $g \in [0,1]$ defines when the algorithm
is in a \emph{early stage}, it checks each expanded solution $z$: whenever the
ratio $\sfrac{z.\mathtt{cost}}{\tmax} \leq g$, it does a greedy expansion.  We
investigate this parameter in more detail in Section~\ref{sec:benchmark} on the
experimental evaluation.

\section{Approximation Algorithms}
\label{sec:approx}

Even though pruning and further optimizations speed up the search for a
solution, in the worst case we still face exponential
run time. After all, we are looking for an optimal solution for an NP-hard
problem. Here we present variations of Algorithm~\ref{algo:mainloop} that turn
it into an approximation algorithm. In principle, we can guarantee an upper bound
on the run time or a lower bound on the score.

\subsection{Bounding the Score}
\label{sec:approxscore}

In line~\ref{alg:fastexit} of Algorithm~\ref{algo:mainloop} we 
remove (partial) solutions from the end of the deque that have a potential
score smaller than the best solution $b$ found so far. 
Since solutions are sorted by an overestimation of
their score, we never accidentally get rid of a partial solution that could be
expanded into the optimal one.
However, we could prune more aggressively
by introducing a \emph{cut factor} $c$ ($c \geq 1$) and checking whether
$c \cdot b.\mathtt{score} > Q.\mathtt{minimum}().\mathtt{score} 
+ Q.\mathtt{minimum}().\mathtt{extra}$.\footnote{If
we set $c$ to 1, we get the original algorithm, i.e., the approximation
algorithm is actually a generalization.}
The larger $c$, the more partial solutions will be ignored. However, by doing
so we also risk losing the optimal and other good solutions. Nevertheless, 
we are sure to get a solution that
guarantees at least $\alpha = \sfrac{1}{c}$ of the score of the optimal solution.

\subsection{Bounding the Run Time}
\label{sec:approxruntime}

While the approach in the previous section makes sure that we always get a
certain amount of the optimal score, we are not able to bound its run
time. All we know is that it is faster, since we prune more partial
solutions. By modifying Algorithm~\ref{algo:mainloop} in a different way, we
can obtain a run time bound. Instead of allowing the queue to become
arbitrarily long, we limit its length to a maximum of $\lmax$ entries. 
In this way, we put a limit on the number of partial solutions that can be
expanded. Before pushing an expanded solution $z$ back into the deque in 
line~\ref{alg:insidepush}, we check whether there is still space. If there is not enough space,
we remove the partial solution with the lowest potential score (this could
also be $z$).
While we cannot determine the quality of the answer a priori, we
know the value for $\alpha$ upon completion of the algorithm if
we keep track of the largest potential score of all the
partial solutions we discarded. Assume that
$z_{\max}$ has the largest potential score among the discarded solutions. Then
we know that $\alpha = \sfrac{b}{\pot{z_{max}}}$, where $b$ is the optimal solution.

Another, more direct, way of controlling the run time is to set an explicit
limit $r$ for it. After the algorithm uses up the allocated time, it returns
the best solution found so far. For example, we can check between
line~\ref{alg:timeoutexitb} and~\ref{alg:timeoutexite} whether there is any
time left. If not, we jump out of the while-loop.  Again, it is not possible to
prescribe a value for $\alpha$ beforehand, but we can determine its value when
the algorithm finishes: in this case $z_{\max}$ is the partial solution with
the largest potential score that is still in the deque when we stop.

\section{Properties and Bounds}
\label{sec:theobounds}

In the following we prove that the pruning we do leads to an algorithm
computing the itinerary with the maximal score. Additionally, we prove
the bounds for the score and the run time of the approximation algorithms.

\subsection{Correctness of Pruning}

Let ${\cal I}_c$ be a {\em complete itinerary}, i.e., no further POIs can be
added to ${\cal I}_c$ without violating a constraint.
We have to prove that for every complete itinerary ${\cal I}_c$, 
$\pot{{\cal I}_c} = \fscore{{\cal I}_c} \leq \pot{{\cal I}_p}$, where 
${\cal I}_p$ is a partial itinerary of ${\cal I}_c$
(${\cal I}_p \subset {\cal I}_c$). 
${\cal I}_p \subset {\cal I}_c$ iff the POIs $p_1, p_2, \dots, p_p$ in 
${\cal I}_p$ are a subset of the POIs $p_1, p_2, \dots, p_c$ in 
${\cal I}_c$. The potential score of ${\cal I}_c$ is equal to its score, since
$\extra{{\cal I}_c} = 0$ (no more POIs can be added).
 We can actually show a stronger statement that includes the one above.

\begin{lemma}
As we expand solutions, their potential score decreases: given two itineraries 
${\cal I}_1$ and ${\cal I}_2$ with ${\cal I}_1 \subset {\cal I}_2$, it follows
that $\pot{{\cal I}_2} \leq \pot{{\cal I}_1}$.
\end{lemma}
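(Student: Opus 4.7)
The plan is to reduce the lemma to a single-POI extension and then carry out a careful bookkeeping on $\extra$. Since $\subset$ is interpreted in the paper as set inclusion of POIs, I order the POIs in ${\cal I}_2 \setminus {\cal I}_1$ and build a chain of itineraries ${\cal I}_1 = {\cal J}_0 \subset {\cal J}_1 \subset \cdots \subset {\cal J}_t = {\cal I}_2$ in which each ${\cal J}_{i+1}$ arises from ${\cal J}_i$ by inserting exactly one new POI (using the ordering inherited from ${\cal I}_2$). By transitivity it then suffices to prove the single-step claim: for any itinerary ${\cal I}$ and POI $p$ of category $k$ with ${\cal I}' = {\cal I} \cup \{p\}$, one has $\pot{{\cal I}'} \leq \pot{{\cal I}}$.

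Decomposing the potentials,
\[
\pot{{\cal I}'} - \pot{{\cal I}} = (\fscore{{\cal I}'} - \fscore{{\cal I}}) + (\extra{{\cal I}'} - \extra{{\cal I}}) = \fscore{p} + (\extra{{\cal I}'} - \extra{{\cal I}}),
\]
so the goal reduces to showing $\extra{{\cal I}} - \extra{{\cal I}'} \geq \fscore{p}$. Inspecting Algorithm~\ref{algo:extra}, the step from ${\cal I}$ to ${\cal I}'$ has three effects on the candidate pool: (a)~$p$ itself is removed from $\PP_{rem}$; (b)~the slot count $\max_k - u_k$ for category $k$ drops by one; (c)~the detour-cost check on line~\ref{alg:costly} may prune further POIs, because inserting $p$ only increases the itinerary cost. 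For categories $k' \neq k$, only (a) and (c) apply and each can only shrink $\PP_{rem}^{k'}$, so the contribution to $\extra$ from $k'$ is non-increasing.

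The heart of the proof is the category-$k$ contribution. I would split on whether $p$ belonged to the top $\max_k - u_k$ scorers of $\PP_{rem}^{k}$ used in $\extra{\cal I}$. If $p$ was in the top, the new sum picks the same top set minus $p$ and shortens by one slot, losing exactly $\fscore{p}$. If $p$ was not in the top, then by the descending sort every one of those top $\max_k - u_k$ POIs has score $\geq \fscore{p}$, and the new sum keeps only the first $\max_k - u_k - 1$ of them, so it loses at least the score of the discarded $(\max_k - u_k)$-th POI, which is $\geq \fscore{p}$. Either way, the drop in $\extra_k$ is at least $\fscore{p}$, and together with the non-increase of $\extra_{k'}$ this gives $\extra{{\cal I}} - \extra{{\cal I}'} \geq \fscore{p}$. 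The corollary $\fscore{{\cal I}_c} \leq \pot{{\cal I}_p}$ for any complete extension ${\cal I}_c$ of ${\cal I}_p$ then falls out, since $\extra{{\cal I}_c} = 0$.

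The main obstacle I anticipate is making effect~(c) rigorous. When the inserted $p$ is placed in the interior of ${\cal I}'$ rather than appended at the end, the penultimate node of ${\cal I}'$ differs from that of ${\cal I}$, and one has to invoke the triangle inequality on the metric graph to argue that every POI passing the detour-cost filter for ${\cal I}'$ also passes it for ${\cal I}$. Once this monotonicity of the filter is sealed, the combinatorial case analysis on the top-$(\max_k-u_k)$ list carries the proof through cleanly.
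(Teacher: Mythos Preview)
Your proposal is correct and follows essentially the same route as the paper: reduce to a single-POI extension (the paper phrases this as structural induction from $n$ to $n+1$) and argue that $\extra{{\cal I}} - \extra{{\cal I}'} \geq \fscore{p}$ because the candidate pool shrinks and one slot disappears. Your treatment is in fact more careful than the paper's: the explicit case split on whether $p$ lies in the current top-$(\max_k - u_k)$ list fills in what the paper summarizes as ``the choice \ldots\ is more restricted,'' and your flagged obstacle about the detour filter when the penultimate node changes is a genuine subtlety that the paper's proof only addresses with the one-line remark that ``the remaining time in ${\cal I}_{n+1}$ is less.'' One tiny slip: for categories $k' \neq k$, effect~(a) does not apply at all (the removed POI $p$ has category $k$), so only~(c) acts there---but your conclusion of non-increase is unaffected.
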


\begin{proof}
We prove the lemma by structural induction.

\begin{itemize}

\item {\bf Induction step ($n \rightarrow n+1$):} \\
  Given two itineraries ${\cal I}_n$ with POIs $p_1, p_2, \dots, p_n$ and 
    ${\cal I}_{n+1}$ with POIs $p_1, p_2, \dots,$ $p_{n+1}$, we have to show that
  $\pot{{\cal I}_{n+1}} \leq \pot{{\cal I}_n}$
\begin{eqnarray*}
& \Leftrightarrow & \fscore{{\cal I}_{n+1}} + \extra{{\cal I}_{n+1}} \\
&                 & \leq \fscore{{\cal I}_n} + \extra{{\cal I}_n} \\
& \Leftrightarrow & \fscore{{\cal I}_n} + \fscore{p_{n+1}} + \extra{{\cal I}_{n+1}} \\
&                 & \leq \fscore{{\cal I}_n} + \extra{{\cal I}_n} \\
& \Leftrightarrow & \fscore{p_{n+1}} + \extra{{\cal I}_{n+1}} \\
&                 &\leq \extra{{\cal I}_n}
\end{eqnarray*}

\begin{sloppypar}
Assuming that we can still add (up to) $k$ POIs to ${\cal I}_n$,
$\extra{{\cal I}_n}$ is computed by taking the top-k POIs from
$\PP_{rem({\cal I}_n)}$, whereas $\extra{{\cal I}_{n+1}}$ is computed by taking
the top-(k-1) POIs from $\PP_{rem({\cal I}_{n+1})}$. We have just expanded 
${\cal I}_n$ by adding $p_{n+1}$ to it, so we know that 
$\PP_{rem({\cal I}_{n+1})} \cup \{ p_{n+1} \} \subseteq 
\PP_{rem({\cal I}_n)}$, as $p_{n+1}$ is missing from 
$\PP_{rem({\cal I}_{n+1})}$ (compared to 
$\PP_{rem({\cal I}_n)}$) and 
$\PP_{rem({\cal I}_{n+1})} \cup \{ p_{n+1} \}$ can at most reach 
$\PP_{rem({\cal I}_n)}$, because the remaining time in itinerary 
${\cal I}_{n+1}$ is less than the remaining time in itinerary ${\cal I}_{n}$. Therefore, the choice we get when picking 
top-scoring POIs we pick for the left-hand side of the inequality is more
restricted compared to the the right-hand side.
\end{sloppypar}

\item {\bf Base case (empty itinerary):} \\
The empty itinerary ${\cal I}_0$ contains no POIs, therefore 
$\pot{{\cal I}_0} = \extra{{\cal I}_0}$ and we choose the top-scoring POIs
from the largest possible set $\PP_{rem({\cal I}_0)}$, which means that
${\cal I}_0$ has the largest possible potential score.
\end{itemize}
\end{proof}

\subsection{Lower Bounding the Score}
\label{sec:lowerboundscore}

Using the approximation technique from Section~\ref{sec:approxscore}, we can
guarantee a lower bound for the score we achieve. At any given time, when
discarding a partial solution ${\cal I}_p$, we know that its potential score
$\pot{{\cal I}_p}$ is at most a factor $c$ greater than the best solution $b$
found so far: $c \cdot b \geq \pot{{\cal I}_p}$. If there is no further change
for $b$, the ratio of achieving $\alpha = \sfrac{1}{c}$ of the best score
holds, as $\pot{{\cal I}_p}$ never underestimates the score of a full
expansion of ${\cal I}_p$. If $b$ is replaced by a new best score $b'$, we
then know that the previously discarded partial solutions are actually closer
to the optimal solution, as $b'$ can only be greater than $b$. In fact, the
cut factor of the previously discarded partial solutions improves to 
$c' = \sfrac{b}{b'} \cdot c$. Additionally, the lemma in the previous section states
that the potential score of an expanded solution can never go up, which means
that we are also on the safe side here: we cannot lose a complete solution
that is less than a factor $\alpha$ of the optimal solution by discarding 
${\cal I}_p$.

Even when using the approximation techniques bounding the run time shown in
Section~\ref{sec:approxruntime}, which do not allow us to set a lower bound
for the score a priori, we can draw conclusions about the factor $\alpha$
after the algorithm finishes. When running our algorithm with a limited queue
length, we keep track of the solution with the largest potential score 
$\pot{{\cal I}_{\max}}$ that we have discarded. When execution stops, we can now
compare $\pot{{\cal I}_{\max}}$ to the best solution $b$ that is returned,
knowing that $\alpha$ has to be at least 
$\sfrac{b}{\pot{{\cal I}_{\max}}}$. Using an explicit time limit $r$, the
largest (implicitly) discarded $\pot{{\cal I}_{\max}}$ can be found at the
head of the queue and we can compute $\alpha$ as indicated above. This even
makes it possible to run our algorithm in an iterative fashion. Once the time
limit $r$ has been reached, but a user is not satisfied with the current level
of quality, they can continue running the algorithm for another limited time
period, checking $\alpha$ again. In this way, the quality of the solution and
the run time can be balanced on the fly.

\subsection{Upper Bounding the Run Time}

When setting an explicit time limit $r$, bounding the run time is
straightforward. In case of a limited queue length, drawing conclusions about
the run time is more complicated.
The worst case for a best-first search algorithm is a scenario in which all
the POIs have the same score, the same cost, and a similar distance from $s$,
$d$ and each other. In this case we first expand the empty partial solution
generating all itineraries of length one. In the next step, we expand all these
itineraries to those of length two, then all of those to those of length
three, and so on, meaning that no pruning is taking place. We now give
an upper bound for the number of generated itineraries and then illustrate the
effect of a bounded queue length on this number.

We have $m$ categories $k_j$, $1 \leq j \leq m$, each with
the constraint $max_{k_j}$ and the set 
$K_{k_j} = \{p_i | p_i \in P, cat(p_i) = k_j\}$ containing
the POIs belonging to this category. The set of relevant
POIs $R$ is equal to $\bigcup_{j=1,max_{k_j} > 0}^m K_{k_j}$,
we denote its cardinality by $|R|$ (assuming that $R$ only contains POIs
actually reachable from $s$ and $d$ in time \tmax{}).
For the first POI we have $|R|$ choices, for the next
one $|R|-1$, and so on until we have created paths of
length $\lambda = \sum_{j=1,max_{k_j} > 0}^m max_{k_j}$. We cannot possibly
have itineraries containing more than $\lambda$ POIs, as this
would mean violating at least one of the category constraints.
So, in a first step this gives us $\prod_{i=0}^{\lambda-1} (|R| - i)$
different itineraries. This is still an overestimation, though, as it
does not consider the $t_{max}$ constraint, nor does it consider
that once an itinerary has reached $max_{k_j}$ POIs for a
category $k_j$, the set $K_{k_j}$ can be discarded completely
when extending the itinerary further. Introducing a maximum queue length of
$\lmax$, we arrive at the final number 
$\prod_{i=0}^{\lambda-1} \min((|R| - i), \lmax)$.

\section{Experimental Evaluation}
\label{sec:benchmark}

We conducted our experiments on a Linux machine (Arch Linux, kernel version
4.8.4) with an i7-4800MQ CPU running at 2.7 GHz and 24 GB of RAM (of which 15
GB were actually used). The algorithms were written in \Cpp{}, compiled with
\texttt{gcc} 6.2 using \texttt{-O2}. The priority deque was implemented with an
interval heap, the graph representing the city map with an adjacency list. For
comparison with the CLIP algorithm we used black-box testing, configuring it
with the relaxed linear programming solver as suggested in \cite{BHW14}. The
greedy strategy uses Algorithm~\ref{algo:greedyexpand} on an empty itinerary
containing just $s$ and $d$.

We averaged 25 different test runs for every data point in the following plots.
For each test run we selected a start and end node randomly, such that
$\cost{s,d} < \tmax$. When comparing different algorithms, we generated one set
of queries and re-used it for every algorithm. Unless specified otherwise, as
default values for the parameters we two hours for $\tmax$, the largest number
of categories possible for a graph (four or nine), a category constraint of two
for each category, a cut ratio of 1.2, a greedy expansion threshold of 1, an
unbounded queue length, and unlimited run time.

\subsection{Data Sets}

We used several different data sets in our evaluation. The first one was an
artificial one consisting of a grid with 10,000 nodes ($100 \times 100$) with a
total of 3,000 POIs in four different categories. The distance between
neighboring nodes in the grid was 60 seconds. The POIs were randomly scattered
in the grid, had a visiting time between three minutes and one hour and a score
between 1 and 100.

The second artificial network is a spider network having the same number
of nodes and POIs as the grid, but the edges are placed as a 100-sided polygon with 100
levels. The edges between levels are 100 seconds, and the central, and
smallest, polygon has sides of 8 seconds.  The other levels become larger and
larger as in a Euclidean plane.  For a visualization of the grid and spider
networks, see Figure~\ref{fig:spidergrid}.

The two real-world data sets were a map of the
Italian city of Bolzano with a total of 1830 POIs in nine categories and a map
of San Francisco with a total of 1983 POIs in four categories.

\begin{figure}[htb]
{
\centering
\newcommand{\D}{13} 
\newcommand{\U}{5} 
\newdimen\R 
\R=10mm
\newcommand{\A}{360/\D}

\begin{tikzpicture}[x=3.333mm,y=3.333mm]
\draw[step=3.333mm] (0,0) grid (6,6);
\end{tikzpicture}
\hspace{2cm}
\begin{tikzpicture}
[rotate=\A/4]

\foreach \X in {1,...,\D}{
  \draw (\X*\A:\R/\U) -- (\X*\A:\R);
}

\foreach \Y in {0,...,\U}{
  \draw (0:\Y*\R/\U) \foreach \X in {1,...,\D}{
      -- (\X*\A:\Y*\R/\U)
  } -- cycle;
}
\end{tikzpicture}

}

\vspace*{\pullupcaption}
\caption{Grid and spider network example}
\label{fig:spidergrid}
\vspace*{\pullupfigure}
\end{figure}
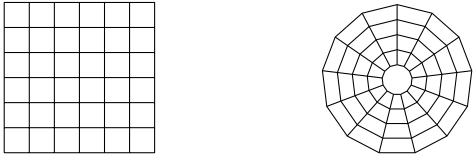

\subsection{Effects of Parameters}

First of all, we illustrate the effects of the different parameters on the
performance of our approximation algorithm. Namely, these are the values for
the greedy expansion threshold, the cut ratio, the queue length limit, and the
run time limit. We also show how the number of categories selected by a user
influences our performance.

\subsubsection{Greedy expansion threshold}

Figure~\ref{fig:greedyexp} shows the impact of the greedy expansion threshold
on the performance of our algorithm (here on a grid network; the results for
the other networks look very similar). As a reminder, the threshold $g$
determines when we stop expanding a partial solution greedily. If the
accumulated cost of an itinerary so far divided by $\tmax$ is less than or
equal to $g$, we do the expansion ($g=0$ meaning we never expand, $g=1$
meaning we always do so).

\begin{figure}[htb]
\begin{tikzpicture}

\begin{axis}[%
graph,
xlabel=Threshold {\smaller[2](percent)},
x coord trafo/.code={\pgfmathparse{\pgfmathresult * 100}},
xtick=data,
ylabel=Score,
legend style={cells={anchor=west}, legend pos=south west},
ymin=0,
]
\pgfplotstableread{experiments/grid_greedy_expansion.txt}\plot
\addplot[color=orange,mark=square] table[x=greedy_cut, y=score] {\plot};
\addlegendentry{Our approach}
\addplot[color=blue] table[x=greedy_cut, y=greedy_score] {\plot};
\addlegendentry{Greedy}
\end{axis}
\end{tikzpicture}
\begin{tikzpicture}
\begin{axis}[%
graphright,
xlabel=Threshold {\smaller[2](percent)},
ylabel=Run time {\smaller[2](seconds)},
legend style={cells={anchor=west}, legend pos=north east},
x coord trafo/.code={\pgfmathparse{\pgfmathresult * 100}},
y coord trafo/.code={\pgfmathparse{\pgfmathresult / 1000}},
xtick=data,
]

\pgfplotstableread{experiments/grid_greedy_expansion.txt}\plot

\addplot[color=orange,mark=square] table[x=greedy_cut, y=runtime] {\plot};
\addlegendentry{Our approach}

\addplot[color=blue] table[x=greedy_cut, y=greedy_runtime] {\plot};
\addlegendentry{Greedy}
\end{axis}
\end{tikzpicture}

\vspace*{\pullupcaption}
\caption{Varying the greedy expansion threshold}
\label{fig:greedyexp}
\vspace*{\pullupfigure}
\end{figure}
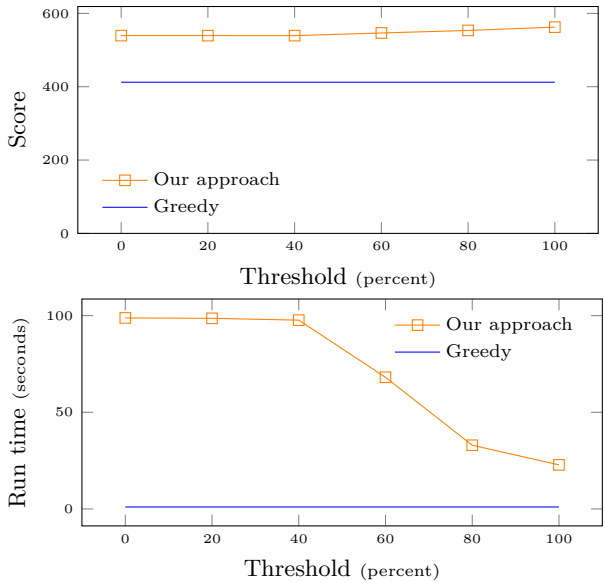

As can be seen in Figure~\ref{fig:greedyexp} (top), the effect on the score is
minimal (for comparison a pure greedy strategy is also shown). However, the
improvements in terms of run time are significant (Figure~\ref{fig:greedyexp},
bottom).  Running a greedy expansion during the execution of the algorithm
updates the best solution found so far much faster, resulting in better
pruning. On average, it was always worth running a greedy expansion right up to
the point when the algorithm finishes, since greedily expanding a partial
solution can be done very quickly. In fact, toward the end of the execution
this can be done even faster, as only one or two more POIs can be added to an
itinerary.

\subsubsection{Cut ratio}
\label{sec:cutratio}

Next, we investigate the impact of the cut ratio.  Increasing the cut ratio
allows us to prune more aggressively, albeit at the price of sometimes losing
the optimal solution. In Figure~\ref{fig:cutratio} we depict the results for
varying the cut ratio from 1 (optimal case) to 2.5 (guaranteeing at least 0.4
of the optimal score).  As expected, the score of the found solutions goes down
when increasing the cut ratio. This is the case for all data sets.
Nevertheless, it does so much slower than the guaranteed lower bound for the
score and it also stays well above the score for the solutions found by the simple
greedy algorithm.
This is especially true for the real-world data sets
(Figures~\ref{fig:cutratio}(c), (d), (e), and (f)). We found that our algorithm
finds good solutions early (more on this in Section~\ref{sec:runtimelimit}
about setting a fixed run time limit).

\begin{figure*}[htb]

\begin{tabularx}{\textwidth}{cXc}

\begin{tikzpicture}

\begin{axis}[%
graph,
xlabel=Cut ratio,
xtick={1.0,1.25,...,3.125},
ylabel=Score,
ymax=620,
legend style={cells={anchor=west}, legend pos=south west},
ymin=0,
]
\pgfplotstableread{experiments/grid_cut_ratio.txt}\plot

\addplot[color=orange,mark=square] table[x=cut_ratio, y=score] {\plot};
\addlegendentry{Our approach}

\addplot[color=blue] table[x=cut_ratio, y=greedy_score] {\plot};
\addlegendentry{Greedy}

\addplot[color=black,mark=|] table[x=cut_ratio, y expr=569.6 / \thisrow{cut_ratio} ] {\plot};
\addlegendentry{Guaranteed lower bound}
\end{axis}
\end{tikzpicture}
&&
\begin{tikzpicture}

\begin{axis}[%
graph,
xlabel=Cut ratio,
xtick={1.0,1.25,...,3.125},
ylabel=Score,
legend style={cells={anchor=west}, legend pos=south west},
ymin=0,
]
\pgfplotstableread{experiments/spider_cut_ratio.txt}\plot

\addplot[color=orange,mark=square] table[x=cut_ratio, y=score] {\plot};
\addlegendentry{Our approach}

\addplot[color=blue] table[x=cut_ratio, y=greedy_score] {\plot};
\addlegendentry{Greedy}

\addplot[color=black,mark=|] table[x=cut_ratio, y expr=321.2 / \thisrow{cut_ratio} ] {\plot};
\addlegendentry{Guaranteed lower bound}
\end{axis}
\end{tikzpicture}
\\
(a) Grid network && (b) Spider network \\
&& \\

\begin{tikzpicture}

\begin{axis}[%
graph,
xlabel=Cut ratio,
xtick={1.0,1.25,...,3.125},
ylabel=Score,
legend style={cells={anchor=west}, legend pos=south west},
ymin=0,
]
\pgfplotstableread{experiments/bozen_cut_ratio.txt}\plot

\addplot[color=orange,mark=square] table[x=cut_ratio, y=score] {\plot};
\addlegendentry{Our approach}

\addplot[color=blue] table[x=cut_ratio, y=greedy_score] {\plot};
\addlegendentry{Greedy}

\addplot[color=black,mark=|] table[x=cut_ratio, y expr=252.32 / \thisrow{cut_ratio} ] {\plot};
\addlegendentry{Guaranteed lower bound}
\end{axis}
\end{tikzpicture}
&&
\begin{tikzpicture}
\raggedright

\begin{axis}[%
graph,
xlabel=Cut ratio,
xtick={1.0,1.25,...,3.125},
ylabel=Score,
legend style={cells={anchor=west}, legend pos=south west},
ymin=0,
]
\pgfplotstableread{experiments/sanfra_cut_ratio.txt}\plot

\addplot[color=orange,mark=square] table[x=cut_ratio, y=score] {\plot};
\addlegendentry{Our approach}

\addplot[color=blue] table[x=cut_ratio, y=greedy_score] {\plot};
\addlegendentry{Greedy}

\addplot[color=black,mark=|] table[x=cut_ratio, y expr=171.92 / \thisrow{cut_ratio} ] {\plot};
\addlegendentry{Guaranteed lower bound}
\end{axis}
\end{tikzpicture}
\\
(c) Bolzano network ($\tmax$ = 2h) && (d) San Francisco network ($\tmax$ = 2h) \\
&& \\

\begin{tikzpicture}

\begin{axis}[%
graph,
xlabel=Cut ratio,
xtick={1.0,1.25,...,3.125},
ylabel=Score,
legend style={cells={anchor=west}, legend pos=south west},
ymin=0,
]
\pgfplotstableread{experiments/bozen_cut_ratio_3h.txt}\plot

\addplot[color=orange,mark=square] table[x=cut_ratio, y=score] {\plot};
\addlegendentry{Our approach}

\addplot[color=blue] table[x=cut_ratio, y=greedy_score] {\plot};
\addlegendentry{Greedy}

\addplot[color=black,mark=|] table[x=cut_ratio, y expr=468.636 / \thisrow{cut_ratio} ] {\plot};
\addlegendentry{Guaranteed lower bound}
\end{axis}
\end{tikzpicture}
&&
\begin{tikzpicture}
\raggedright

\begin{axis}[%
graph,
xlabel=Cut ratio,
xtick={1.0,1.25,...,3.125},
ylabel=Score,
legend style={cells={anchor=west}, legend pos=south west},
ymin=0,
]
\pgfplotstableread{experiments/sanfra_cut_ratio_3h.txt}\plot

\addplot[color=orange,mark=square] table[x=cut_ratio, y=score] {\plot};
\addlegendentry{Our approach}

\addplot[color=blue] table[x=cut_ratio, y=greedy_score] {\plot};
\addlegendentry{Greedy}

\addplot[color=black,mark=|] table[x=cut_ratio, y expr=346.28 / \thisrow{cut_ratio} ] {\plot};
\addlegendentry{Guaranteed lower bound}
\end{axis}
\end{tikzpicture}
\\
(e) Bolzano network ($\tmax$ = 3h) && (f) San Francisco network ($\tmax$ = 3h) \\

\end{tabularx}

\vspace*{\pullupcaption}
\caption{Increasing the cut ratio}
\label{fig:cutratio}
\vspace*{\pullupfigure}
\end{figure*}

The impact of the cut ratio on the run time is much larger, though
(see Figure~\ref{fig:cutratiort}). For some
data sets, namely the grid and San Francisco networks,
increasing the cut ratio brings down the run time
significantly.\footnote{The two hour
  itineraries in the real-world setting for Bolzano and San Francisco are a bit
  too short to notice an effect.} 
Grid networks, including the artificial grid and San Francisco,
whose map is composed of many grid-like structures, have a very long run time
for small cut ratios, while spider networks do not exhibit this behavior.
(Bolzano, which has grown organically over centuries around an old city center,
shares some of the features of a spider network.) 
The different edge lengths make the central part of the spider
network much easier to explore, whereas the external parts are more
isolated. As a consequence, the computation of \texttt{extra}($\cal I$) is
more precise, meaning that we get a much more accurate picture about the
potential scores of the yet unexpanded solutions that are still in the queue,
making it easier to prune partial solutions. In a grid-like network, there are
lots of alternatives, making it more difficult to compute \texttt{extra}($\cal I$)
accurately, so the potential scores of some unexpanded solutions may still be
high, and we need more time to check them.

\begin{figure*}[htb]
\centering

\begin{tikzpicture}
\begin{axis}[%
graphright,
xlabel=Cut ratio,
xtick={1.0,1.25,...,3.125},
width=\textwidth - 15em,
ylabel=Run time {\smaller[2](second)},
scaled y ticks=manual:{}{\pgfmathparse{#1 / 1000}},
legend style={ cells={anchor=west}, at={(1,1)}, anchor=north west},
ymax=300000,
]

\pgfplotstableread{experiments/grid_cut_ratio.txt}\plotgrid
\pgfplotstableread{experiments/spider_cut_ratio.txt}\plotspider
\pgfplotstableread{experiments/bozen_cut_ratio_3h.txt}\plotbozenthreeh
\pgfplotstableread{experiments/bozen_cut_ratio.txt}\plotbozen
\pgfplotstableread{experiments/sanfra_cut_ratio_3h.txt}\plotsanfrathreeh
\pgfplotstableread{experiments/sanfra_cut_ratio.txt}\plotsanfra

\addplot[color=blue] table[x=cut_ratio, y=greedy_runtime] {\plotsanfrathreeh};
\addlegendentry{Greedy}

\addplot[color=orange,mark=triangle] table[x=cut_ratio, y=runtime] {\plotgrid};
\addlegendentry{Grid}

\addplot[color=green,mark=square] table[x=cut_ratio, y=runtime] {\plotspider};
\addlegendentry{Spider}

\addplot[densely dashed,color=blue,mark=diamond] table[x=cut_ratio, y=runtime] {\plotbozen};
\addlegendentry{Bolzano (2h)}

\addplot[color=blue,mark=diamond] table[x=cut_ratio, y=runtime] {\plotbozenthreeh};
\addlegendentry{Bolzano (3h)}

\addplot[densely dashed, color=red,mark=o] table[x=cut_ratio, y=runtime] {\plotsanfra};
\addlegendentry{San Francisco (2h)}

\addplot[color=red,mark=o] table[x=cut_ratio, y=runtime] {\plotsanfrathreeh};
\addlegendentry{San Francisco (3h)}
\end{axis}

\end{tikzpicture}

\vspace*{\pullupcaption}
\caption{Effect of the cut ratio on run time}
\label{fig:cutratiort}
\vspace*{\pullupfigure}
\end{figure*}

\subsubsection{Queue length}

Now we look at the impact of bounding the queue length to shorten the run time
of our algorithm. The left-hand column of Figure~\ref{fig:dequelength}, labeled
(a), shows the results for the grid network, the right-hand column, labeled
(b), those for the spider network. The cut ratio is set to 1 to measure only
the impact of the queue length limit.\footnote{Due to the cut ratio, we did not
run this experiment for the real-world data sets, as is would have taken too
much time.} The size of the queue on the x-axis is measured relatively to the
number of POIs in the graph and the horizontal line at the top of the figure is
the optimal result when running our algorithm with an unbounded queue. As can
be seen in the top row of Figure~\ref{fig:dequelength}, having a very short
queue lowers the score slightly.  We do not start the plot y-axis at 0 to make
the small differences of the score more visible.  That is also the reason why
we did not add the score of the greedy algorithm (it is 412 for
Figure~\ref{fig:dequelength}(a) and 252 for Figure~\ref{fig:dequelength}(b)).
Unsurprisingly, the run time goes up with increasing queue length (see bottom
row of Figure~\ref{fig:dequelength}) and it will eventually reach the run time
of the unbounded queue.

\begin{figure*}[htb]

\begin{tabularx}{\textwidth}{cXc}

\begin{tikzpicture}
\begin{axis}[%
graph,
xlabel=Queue length {\smaller[2](times number of POI)},
xtick={0.0,0.3,...,2.2},
xmax=2.3,
ylabel=Score,
legend style={cells={anchor=west}, legend pos=south east},
]
\pgfplotstableread{experiments/grid_queue_length_new2.txt}\plot

\draw[color=orange,thin] (axis cs:\pgfkeysvalueof{/pgfplots/xmin},561.56) -- (axis cs:\pgfkeysvalueof{/pgfplots/xmax},561.56);

\addplot[color=orange] table[x=queue_length, y=score] {\plot};
\addlegendentry{Our approach}

\end{axis}
\end{tikzpicture}
&&
\begin{tikzpicture}
\begin{axis}[%
graph,
xlabel=Queue length {\smaller[2](times number of POI)},
xtick={0.0,0.3,...,2.2},
ylabel=Score,
legend style={cells={anchor=west}, legend pos=south east},
]
\pgfplotstableread{experiments/spider_queue_length_f.txt}\plot

\draw[color=orange,thin] (axis cs:\pgfkeysvalueof{/pgfplots/xmin},321.2) -- (axis cs:\pgfkeysvalueof{/pgfplots/xmax},321.2);

\addplot[color=orange] table[x=queue_length, y=score] {\plot};
\addlegendentry{Our approach}
\end{axis}
\end{tikzpicture}
\\
\begin{tikzpicture}
\begin{axis}[%
graphright,
xlabel=Queue length {\smaller[2](times number of POI)},
xtick={0.0,0.3,...,2.2},
xmax=2.3,
ymax=105152,
ylabel=Run time {\smaller[2](second)},
scaled y ticks=manual:{}{\pgfmathparse{#1 / 1000}},
ytick={0,20000,40000,60000,80000,100000},
legend style={ cells={anchor=west}, at={(0,0.5)}, anchor=west},
]
\pgfplotstableread{experiments/grid_queue_length_new2.txt}\plot

\draw[color=orange,thin] (axis cs:\pgfkeysvalueof{/pgfplots/xmin},95116) -- (axis cs:\pgfkeysvalueof{/pgfplots/xmax},95116);

\addplot[color=orange,mark=square] table[x=queue_length, y=runtime] {\plot};
\addlegendentry{Our approach}

\addplot[color=blue] table[x=queue_length, y=greedy_runtime] {\plot};
\addlegendentry{Greedy}
\end{axis}
\end{tikzpicture}
&&
\begin{tikzpicture}
\begin{axis}[%
graphright,
xlabel=Queue length {\smaller[2](times number of POI)},
xtick={0.0,0.3,...,2.2},
ymax=20000,
ylabel=Run time {\smaller[2](second)},
scaled y ticks=manual:{}{\pgfmathparse{#1 / 1000}},
legend style={ cells={anchor=west}, at={(0,0.5)}, anchor=west},
]
\pgfplotstableread{experiments/spider_queue_length_f.txt}\plot

\draw[color=orange,thin] (axis cs:\pgfkeysvalueof{/pgfplots/xmin},18091) -- (axis cs:\pgfkeysvalueof{/pgfplots/xmax},18091);

\addplot[color=orange,mark=square] table[x=queue_length, y=runtime] {\plot};
\addlegendentry{Our approach}

\addplot[color=blue] table[x=queue_length, y=greedy_runtime] {\plot};
\addlegendentry{Greedy}
\end{axis}
\end{tikzpicture}
\\
(a) Grid network && (b) Spider network \\
\end{tabularx}

\caption{Limiting the queue length}

\vspace*{\pullupcaption}
\label{fig:dequelength}
\vspace*{\pullupfigure}
\end{figure*}

\subsubsection{Run Time Limit}
\label{sec:runtimelimit}

Bounding the queue length is an indirect way of controlling the run time, in
Figure~\ref{fig:limitruntimegrid} we show results for setting an explicit run
time limit. The orange line at the top of of each plot represents the optimal
score. We can see that even for very short run time limits we get results
for every data set that are very close to the optimal score. What this means is
that our algorithm finds very good solutions early in the search process and
then, if we let it continue, spends the remaining time basically verifying that
there are not significantly better solutions.  We also clearly outperform the
greedy heuristic in terms of score (the greedy algorithm takes between 0.2 and
1.5 seconds to find a solution).

When limiting the run time, we cannot guarantee the quality of the attained
score a priori. Nevertheless, when the algorithm stops running, toghether with
the score it returns the factor $\alpha$, that tell us which proportion of the
optimal score we have reached at least (for details, see
Section~\ref{sec:lowerboundscore}).  We have plotted the lower bound $\alpha$
in Figure~\ref{fig:limitruntimegrid}, please note that for this measure the
scale on the right-hand side of the y-axis is used. This has important
implications for running our algorithm: we basically have an anytime algorithm.
We can stop it at any time and get a solution and its quality.  Eventually we
continue with the execution and investigate the current state a short time
later, repeating the process.  Consequently, we are able to balance the quality
of the solution and the run time on the fly.

Looking at Figure~\ref{fig:limitruntimegrid}, we see that the results for two
of the data sets, the spider network in Figure~\ref{fig:limitruntimegrid}(b)
and the Bolzano network in Figure~\ref{fig:limitruntimegrid}(c) are somewhat
different, which comes as a surprise, as we mentioned that the Bolzano network
shares some of the properties of a spider network in
Section~\ref{sec:cutratio}.  However, there are other effects at work here,
too.  For the Bolzano network the larger number of categories is responsible
for the lower guarantee. This makes it more difficult to compute
\texttt{extra}($\cal I$) accurately, making it more difficult to discard
potential unexpanded solutions.  (We go into more details about the number of
categories in the following section.) We would like to point out that the found
solutions are close to the optimal for every data set, only the lower bounds
for the score are affected differently.

\begin{figure*}[htb]

\begin{tabularx}{\textwidth}{cXc}

\begin{tikzpicture}

\begin{axis}[%
graph,
width=75mm,
xlabel=Run time limit {\smaller[2](seconds)},
ylabel=Score,
axis y line*=none,
ymin=0,
ymax=603.58,
]
\pgfplotstableread{experiments/grid_runtime_limit_new.txt}\plot

\draw[color=orange,thin] (axis cs:\pgfkeysvalueof{/pgfplots/xmin},586.0) -- (axis cs:\pgfkeysvalueof{/pgfplots/xmax},586.0);

\addplot[color=orange,mark=square] table[x=runtime_limit, y=score] {\plot}; \label{ourapproach}
\addplot[color=blue] table[x=runtime_limit, y=greedy_score] {\plot}; \label{greedy}
\end{axis}

\begin{axis}[%
graph,
width=75mm,
ylabel=Quality,
ymin=.0,
ymax=1.03,
axis x line=none,
axis y line*=right,
legend style={at={(0.02,0.3)},anchor=west,cells={anchor=west} },
]
\pgfplotstableread{experiments/grid_runtime_limit_new.txt}\plot

\addlegendimage{/pgfplots/refstyle=greedy}\addlegendentry{Greedy}
\addlegendimage{/pgfplots/refstyle=ourapproach}\addlegendentry{Our approach}

\addplot[color=olive,mark=o] table[x=runtime_limit, y expr=1. / \thisrow{quality}] {\plot};
\addlegendentry{Solution quality $\alpha$}

\end{axis}
\end{tikzpicture}
&&
\begin{tikzpicture}

\begin{axis}[%
graph,
width=75mm,
xlabel=Run time limit {\smaller[2](seconds)},
ylabel=Score,
axis y line*=none,
ymin=0,
ymax=330.836,
]
\pgfplotstableread{experiments/spider_runtime_limit.txt}\plot

\draw[color=orange,thin] (axis cs:\pgfkeysvalueof{/pgfplots/xmin},321.2) -- (axis cs:\pgfkeysvalueof{/pgfplots/xmax},321.2);

\addplot[color=orange,mark=square] table[x=runtime_limit, y=score] {\plot}; \label{ourapproach}
\addplot[color=blue] table[x=runtime_limit, y=greedy_score] {\plot}; \label{greedy}
\end{axis}

\begin{axis}[%
graph,
width=75mm,
ylabel=Quality,
ymin=.0,
ymax=1.03,
axis x line=none,
axis y line*=right,
legend style={at={(0.02,0.3)},anchor=west,cells={anchor=west} },
]
\pgfplotstableread{experiments/spider_runtime_limit.txt}\plot

\addlegendimage{/pgfplots/refstyle=greedy}\addlegendentry{Greedy}
\addlegendimage{/pgfplots/refstyle=ourapproach}\addlegendentry{Our approach}

\addplot[color=olive,mark=o] table[x=runtime_limit, y expr=1. / \thisrow{quality}] {\plot};
\addlegendentry{Solution quality $\alpha$}

\end{axis}
\end{tikzpicture}
\\
(a) Grid network && (b) Spider network \\
&& \\
\begin{tikzpicture}

\begin{axis}[%
graph,
width=75mm,
xlabel=Run time limit {\smaller[2](seconds)},
ylabel=Score,
axis y line*=none,
ymin=0,
ymax=262.980,
]
\pgfplotstableread{experiments/bozen_runtime_limit.txt}\plot

\draw[color=orange,thin] (axis cs:\pgfkeysvalueof{/pgfplots/xmin},255.32) -- (axis cs:\pgfkeysvalueof{/pgfplots/xmax},255.32);

\addplot[color=orange,mark=square] table[x=runtime_limit, y=score] {\plot}; \label{ourapproach}
\addplot[color=blue] table[x=runtime_limit, y=greedy_score] {\plot}; \label{greedy}
\end{axis}

\begin{axis}[%
graph,
width=75mm,
ylabel=Quality,
ymin=.0,
ymax=1.03,
axis x line=none,
axis y line*=right,
legend style={cells={anchor=west}, legend pos=south east},
]
\pgfplotstableread{experiments/bozen_runtime_limit.txt}\plot

\addlegendimage{/pgfplots/refstyle=greedy}\addlegendentry{Greedy}
\addlegendimage{/pgfplots/refstyle=ourapproach}\addlegendentry{Our approach}

\addplot[color=olive,mark=o] table[x=runtime_limit, y expr=1. / \thisrow{quality}] {\plot};
\addlegendentry{Solution quality $\alpha$}

\end{axis}
\end{tikzpicture}
&& 
\begin{tikzpicture}

\begin{axis}[%
graph,
width=75mm,
xlabel=Run time limit {\smaller[2](seconds)},
ylabel=Score,
axis y line*=none,
ymin=0,
ymax=177.078,
]
\pgfplotstableread{experiments/sanfra_runtime_limit.txt}\plot

\draw[color=orange,thin] (axis cs:\pgfkeysvalueof{/pgfplots/xmin},171.92) -- (axis cs:\pgfkeysvalueof{/pgfplots/xmax},171.92);

\addplot[color=orange,mark=square] table[x=runtime_limit, y=score] {\plot}; \label{ourapproach}
\addplot[color=blue] table[x=runtime_limit, y=greedy_score] {\plot}; \label{greedy}
\end{axis}

\begin{axis}[%
graph,
width=75mm,
ylabel=Quality,
ymin=.0,
ymax=1.03,
axis x line=none,
axis y line*=right,
legend style={cells={anchor=west}, legend pos=south east},
]
\pgfplotstableread{experiments/sanfra_runtime_limit.txt}\plot

\addlegendimage{/pgfplots/refstyle=greedy}\addlegendentry{Greedy}
\addlegendimage{/pgfplots/refstyle=ourapproach}\addlegendentry{Our approach}

\addplot[color=olive,mark=o] table[x=runtime_limit, y expr=1. / \thisrow{quality}] {\plot};
\addlegendentry{Solution quality $\alpha$}

\end{axis}
\end{tikzpicture}
\\
(c) Bolzano network (2h) && (d) San Francisco network (2h) \\
\end{tabularx}

\vspace*{\pullupcaption}
\caption{Limiting the run time}
\label{fig:limitruntimegrid}
\vspace*{\pullupfigure}
\end{figure*}
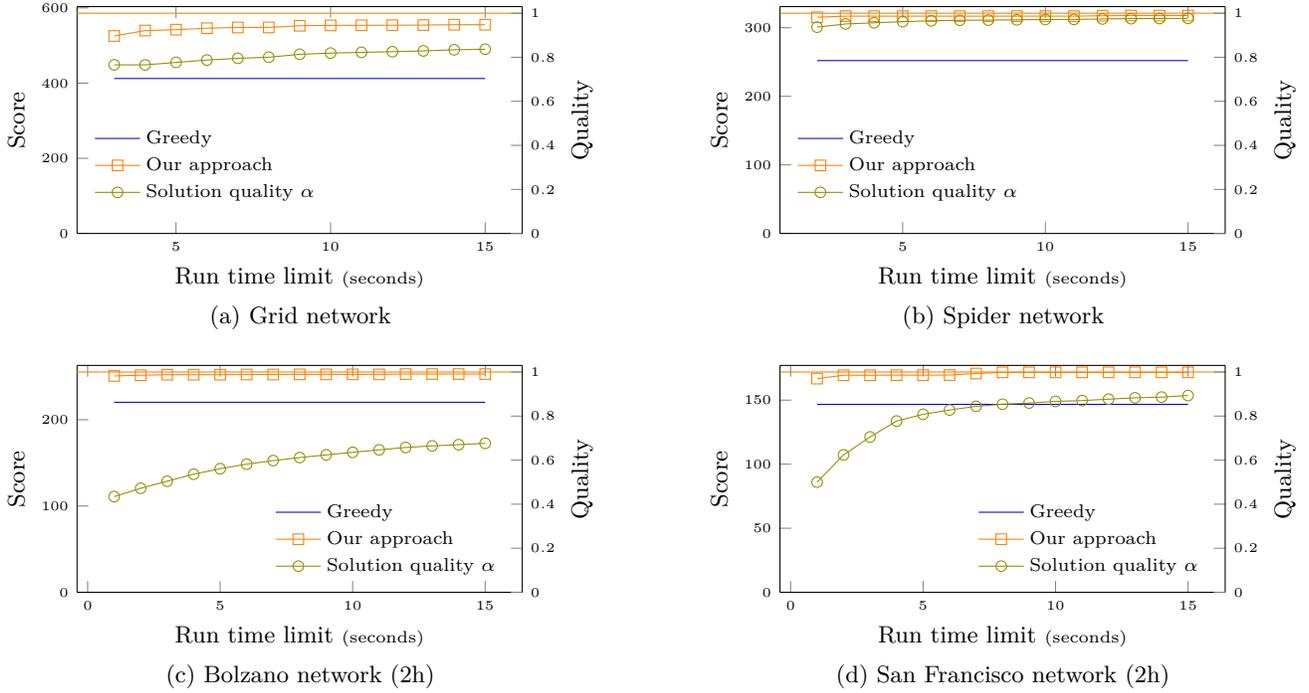

\subsubsection{Impact of Categories}

For the final parameter, we illustrate the impact of categories on the run
time. While our approximation algorithms scale well in terms of the itinerary
length ($\tmax$), as we will see in Section~\ref{sec:realworld}, this is not
independent of the number of categories. Figure~\ref{fig:nrcategoriesscore}(a)
shows the effects of increasing the number of categories. By putting a limit on
the run time, we can keep the execution time of the algorithm low, while still
generating very good solutions. In the upper diagram, we only show the plots
for two and eight categories to keep it uncluttered. In the lower diagram, the
run time for two categories actually goes down for longer itineraries. At the
end of visiting two POIs of two categories each, we actually start having more
and more spare time, which makes the problem easier to solve.

Figure~\ref{fig:nrcategoriesscore}(b) shows the effect of increasing the maximum
constraint of categories. Again, we quickly generate good solutions in a
scalable way, due to the explicit limit on the run time.

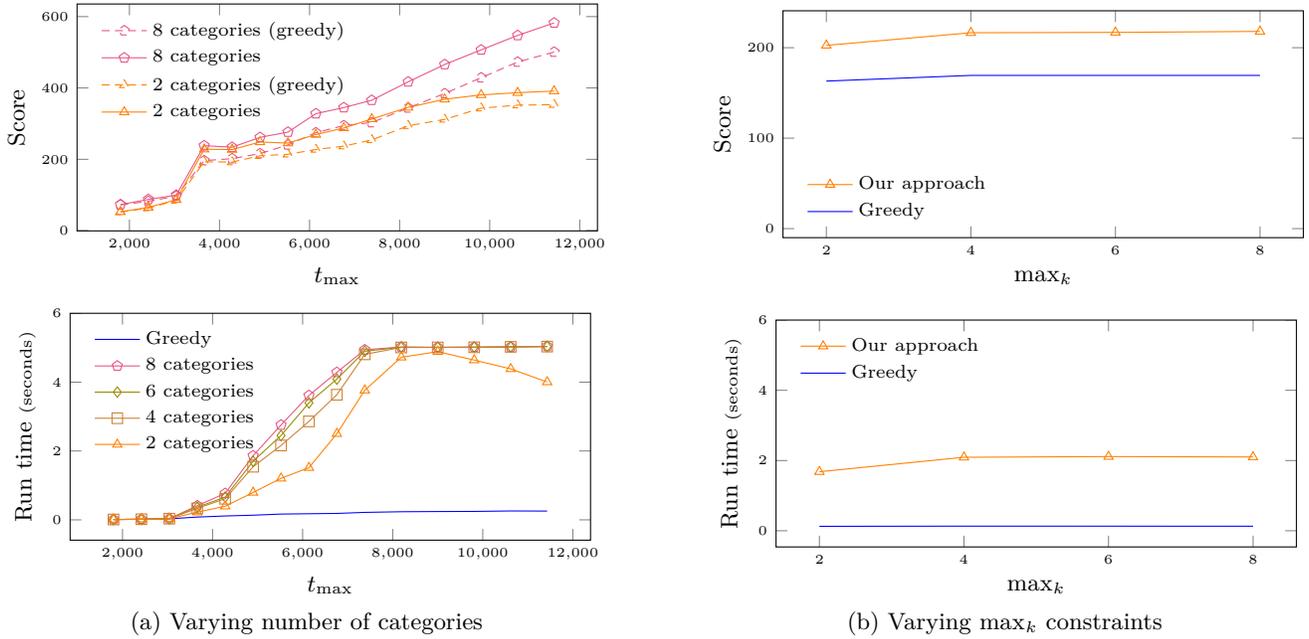
\begin{figure*}[htb]

\begin{tabularx}{\textwidth}{cXc}
\begin{tikzpicture}
\begin{axis}[%
graphright,
xlabel=\tmax{},
ylabel=Score,
legend style={cells={anchor=west}, legend pos=north west},
]
\pgfplotstableread{experiments/categories/bozen_2cats_5sec.txt}\plottwo
\pgfplotstableread{experiments/categories/bozen_4cats_5sec.txt}\plotfour
\pgfplotstableread{experiments/categories/bozen_6cats_5sec.txt}\plotsix
\pgfplotstableread{experiments/categories/bozen_8cats_5sec.txt}\ploteight

\addplot[densely dashed, color=darkpink,mark=pentagon] table[x=t_max, y=greedy_score] {\ploteight};
\addlegendentry{8 categories (greedy)}

\addplot[color=darkpink,mark=pentagon] table[x=t_max, y=score] {\ploteight};
\addlegendentry{8 categories}

\addplot[densely dashed,color=orange,mark=triangle] table[x=t_max, y=greedy_score] {\plottwo};
\addlegendentry{2 categories (greedy)}

\addplot[color=orange,mark=triangle] table[x=t_max, y=score] {\plottwo};
\addlegendentry{2 categories}

\end{axis}
\end{tikzpicture}
&&
\begin{tikzpicture}
\begin{axis}[%
graph,
xlabel=$\max_{k}$,
xtick={2,4,...,8},
ymin=-10,
ylabel=Score,
legend style={ cells={anchor=west}, legend pos=south west },
]
\pgfplotstableread{experiments/bozen_maxk.txt}\plot

\addplot[color=orange,mark=triangle] table[x=maxk, y=score] {\plot};
\addlegendentry{Our approach}

\addplot[color=blue] table[x=maxk, y=greedy_score] {\plot};
\addlegendentry{Greedy}
\end{axis}
\end{tikzpicture} 
\\
\begin{tikzpicture}
\begin{axis}[%
graphright,
xlabel=\tmax{},
ylabel=Run time {\smaller[2](seconds)},
ymax=6,
legend style={cells={anchor=west}, legend pos=north west},
]
\pgfplotstableread{experiments/categories/bozen_2cats_5sec.txt}\plottwo
\pgfplotstableread{experiments/categories/bozen_4cats_5sec.txt}\plotfour
\pgfplotstableread{experiments/categories/bozen_6cats_5sec.txt}\plotsix
\pgfplotstableread{experiments/categories/bozen_8cats_5sec.txt}\ploteight

\addplot[color=blue] table[x=t_max, y expr=\thisrow{greedy_runtime} / 1000] {\ploteight};
\addlegendentry{Greedy}

\addplot[color=darkpink,mark=pentagon] table[x=t_max, y expr=\thisrow{runtime} / 1000] {\ploteight};
\addlegendentry{8 categories}

\addplot[color=olive,mark=diamond] table[x=t_max, y expr=\thisrow{runtime} / 1000] {\plotsix};
\addlegendentry{6 categories}

\addplot[color=bronze,mark=square] table[x=t_max, y expr=\thisrow{runtime} / 1000] {\plotfour};
\addlegendentry{4 categories}

\addplot[color=orange,mark=triangle] table[x=t_max, y expr=\thisrow{runtime} / 1000] {\plottwo};
\addlegendentry{2 categories}
\end{axis}
\end{tikzpicture} 
&&
\begin{tikzpicture}
\begin{axis}[%
graph,
xlabel=$\max_{k}$,
xtick={2,4,...,8},
ylabel=Run time {\smaller[2](seconds)},
ymax=6,
legend style={cells={anchor=west}, legend pos=north west},
]
\pgfplotstableread{experiments/bozen_maxk.txt}\plot

\addplot[color=orange,mark=triangle] table[x=maxk, y expr=\thisrow{runtime} / 1000] {\plot};
\addlegendentry{Our approach}

\addplot[color=blue] table[x=maxk, y expr= \thisrow{greedy_runtime} / 1000] {\plot};
\addlegendentry{Greedy}
\end{axis}
\end{tikzpicture} \\
(a) Varying number of categories && (b) Varying $\max_k$ constraints \\
\end{tabularx}

\vspace*{\pullupcaption}
\caption{Impact of Categories}
\label{fig:nrcategoriesscore}
\vspace*{\pullupfigure}
\end{figure*}

\subsection{Comparison with Competitors}
\label{sec:realworld}

Here we compare our algorithm to the state-of-the-art. We distinguish two
different cases: (a) the artificial networks (grid and spider) and (b) the
real-world data sets (Bolzano and San Francisco). For the real-world data sets
we also run more realistic queries with a larger time constraint $\tmax$ and
for that reason do not compare it to the optimal algorithm (as its run time
explodes for long itineraries).

\subsubsection{Grid and Spider Networks}

Figure~\ref{fig:againstexactgrid} illustrates the results of running our
algorithm with a cut factor of 1.2 against the optimal algorithm (the
left-hand column, labeled (a), shows the results for the grid network, the
right-hand column, labeled (b), shows those for the spider network). In terms
of the score our approach outperforms the greedy approach and comes close to
the optimal solution, while at the same time having a much better run-time
performance than the optimal algorithm. The final measurements for large
values of $\tmax$ are missing, since we aborted runs taking longer than ten
minutes to complete.

\begin{figure*}[htb]

\begin{tabularx}{\textwidth}{cXc}

\begin{tikzpicture}

\begin{axis}[%
graph,
xlabel=\tmax{} {\smaller[2](seconds)},
ylabel=Score,
ymin=-25,
legend style={cells={anchor=west}, legend pos=north west},
]
\pgfplotstableread{experiments/grid_runtime_unlimited.txt}\plot
\pgfplotstableread{experiments/grid_exact_results.txt}\plotexact

\addplot[color=green,mark=o] table[x=t_max, y=score] {\plotexact};
\addlegendentry{Exact algorithm}

\addplot[color=orange,mark=square] table[x=t_max, y=score] {\plot};
\addlegendentry{Our approach}

\addplot[color=blue] table[x=t_max, y=greedy_score] {\plot};
\addlegendentry{Greedy}

\end{axis}
\end{tikzpicture}
&&
\begin{tikzpicture}

\begin{axis}[%
graph,
xlabel=\tmax{} {\smaller[2](seconds)},
ylabel=Score,
ymin=-25,
legend style={cells={anchor=west}, legend pos=north west},
]
\pgfplotstableread{experiments/spider_us.txt}\plot
\pgfplotstableread{experiments/spider_opt.txt}\plotexact

\addplot[color=green,mark=o] table[x=t_max, y=score] {\plotexact};
\addlegendentry{Exact algorithm}

\addplot[color=orange,mark=square] table[x=t_max, y=score] {\plot};
\addlegendentry{Our approach}

\addplot[color=blue] table[x=t_max, y=greedy_score] {\plot};
\addlegendentry{Greedy}

\end{axis}
\end{tikzpicture} \\

\begin{tikzpicture}
\begin{axis}[%
graphright,
xlabel=\tmax{} {\smaller[2](seconds)},
ylabel=Run time {\smaller[2](seconds)},
scaled y ticks=manual:{}{\pgfmathparse{#1 / 1000}},
ytick={0,25000,50000,75000,100000},
ymin=-5000,
legend style={cells={anchor=west}, legend pos=north west},
]
\pgfplotstableread{experiments/grid_runtime_unlimited.txt}\plot
\pgfplotstableread{experiments/grid_exact_results.txt}\plotexact

\addplot[color=green,mark=o] table[x=t_max, y=optimal_algorithm] {\plotexact};
\addlegendentry{Exact algorithm}

\addplot[color=orange,mark=square] table[x=t_max, y=runtime] {\plot};
\addlegendentry{Our approach}

\addplot[color=blue] table[x=t_max, y=greedy_runtime] {\plot};
\addlegendentry{Greedy}
\end{axis}
\end{tikzpicture}
&&
\begin{tikzpicture}
\begin{axis}[%
graphright,
xlabel=\tmax{} {\smaller[2](seconds)},
ylabel=Run time {\smaller[2](seconds)},
scaled y ticks=manual:{}{\pgfmathparse{#1 / 1000}},
legend style={cells={anchor=west}, legend pos=north west},
]
\pgfplotstableread{experiments/spider_us.txt}\plot
\pgfplotstableread{experiments/spider_opt.txt}\plotexact

\addplot[color=green,mark=o] table[x=t_max, y=runtime] {\plotexact};
\addlegendentry{Exact algorithm}

\addplot[color=orange,mark=square] table[x=t_max, y=runtime] {\plot};
\addlegendentry{Our approach}

\addplot[color=blue] table[x=t_max, y=greedy_runtime] {\plot};
\addlegendentry{Greedy}

\end{axis}
\end{tikzpicture}
\\
(a) Grid network && (b) Spider network \\
\end{tabularx}

\vspace*{\pullupcaption}
\caption{Comparison with the exact algorithm}
\label{fig:againstexactgrid}
\vspace*{\pullupfigure}
\end{figure*}

\subsubsection{Real-world Data Sets}

We now move to the real-world data sets. Figure~\ref{fig:bozenmap}(a), found
in the left-hand column, compares three different variants of our algorithm
(cut factor, bounded queue, and run time limit) with the state-of-the-art
algorithm for orienteering with categories, 
CLIP on a map of Bolzano. In terms of the score (upper part of
Figure~\ref{fig:bozenmap}(a)) the three variants of our algorithm are almost
identical: merely one measurement resulted in a difference two digits after the
decimal point (thus we only depict one curve). Our algorithm outperforms both,
CLIP and the greedy heuristic, the latter by a large margin. When we look at
the run time (lower part of Figure~\ref{fig:bozenmap}(a)), we see huge
differences, though. The only competitive algorithms are the ones limiting the
run time in some form, either directly or by limiting the queue length.

\begin{figure*}[htb]

\begin{tabularx}{\textwidth}{cXc}

\begin{tikzpicture}

\begin{axis}[%
graph,
xlabel=\tmax{} {\smaller[2](seconds)},
ylabel=Score,
legend style={cells={anchor=west}, legend pos=north west},
ymin=-25,
ymax=470,
]
\pgfplotstableread{experiments/bozen_tmax_queue_1.txt}\plotbozen
\pgfplotstableread{experiments/clip_bozen_tmax.txt}\plotclipbozen

\addplot[color=orange,mark=square] table[x=t_max, y=score] {\plotbozen};
\addlegendentry{Our approach, all three settings}

\addplot[color=blue] table[x=t_max, y=greedy_score] {\plotbozen};
\addlegendentry{Greedy}

\addplot[color=magenta,mark=triangle] table[x=tmax, y=score] {\plotclipbozen};
\addlegendentry{Clip}

\end{axis}
\end{tikzpicture}
&&
\begin{tikzpicture}

\begin{axis}[%
graph,
xlabel=\tmax{} {\smaller[2](seconds)},
ylabel=Score,
ymin=-30,
legend style={cells={anchor=west}, legend pos=south west},
]
\pgfplotstableread{experiments/sfrans_tmax.txt}\plot
\pgfplotstableread{experiments/clip_sfrans_tmax.txt}\plotclip

\addplot[color=magenta,mark=triangle] table[x=t_max, y=score] {\plotclip};
\addlegendentry{Clip}

\addplot[color=orange,mark=square] table[x=t_max, y=score] {\plot};
\addlegendentry{Our approach}

\addplot[color=blue] table[x=t_max, y=greedy_score] {\plot};
\addlegendentry{Greedy}
\end{axis}
\end{tikzpicture}
\\
\begin{tikzpicture}
\begin{axis}[%
graphright,
xlabel=\tmax{} {\smaller[2](seconds)},
ylabel=Run time {\smaller[2](seconds)},
scaled y ticks=manual:{}{\pgfmathparse{#1 / 1000}},
ymin=-10000,
ymax=200000,
legend style={cells={anchor=west}, legend pos=north west},
]
\pgfplotstableread{experiments/bozen_tmax_queue_limited_runtime.txt}\plotbozenlruntime
\pgfplotstableread{experiments/bozen_tmax_queue_unbound.txt}\plotbozenunbound
\pgfplotstableread{experiments/bozen_tmax_queue_1.txt}\plotbozenone
\pgfplotstableread{experiments/clip_bozen_tmax.txt}\plotclipbozen

\addplot[color=darkpink,mark=diamond] table[x=t_max, y=runtime] {\plotbozenlruntime};
\addlegendentry{Our approach, limited run time}

\addplot[color=olive,mark=pentagon] table[x=t_max, y=runtime] {\plotbozenone};
\addlegendentry{Our approach, bounded deque}

\addplot[color=orange,mark=square] table[x=t_max, y=runtime] {\plotbozenunbound};
\addlegendentry{Our approach, cut ratio}

\addplot[color=magenta,mark=triangle] table[x=tmax, y=runtime] {\plotclipbozen};
\addlegendentry{Clip}

\addplot[color=blue] table[x=t_max, y=greedy_runtime] {\plotbozenone};
\addlegendentry{Greedy}
\end{axis}

\end{tikzpicture}
&&
\begin{tikzpicture}
\begin{axis}[%
graphright,
xlabel=\tmax{} {\smaller[2](seconds)},
ylabel=Run time {\smaller[2](seconds)},
scaled y ticks=manual:{}{\pgfmathparse{#1 / 1000}},
ymin=-5000,
legend style={cells={anchor=west}, legend pos=north west},
]
\pgfplotstableread{experiments/sfrans_tmax.txt}\plot
\pgfplotstableread{experiments/clip_sfrans_tmax.txt}\plotclip

\addplot[color=magenta,mark=triangle] table[x=t_max, y=runtime] {\plotclip};
\addlegendentry{Clip}

\addplot[color=orange,mark=square] table[x=t_max, y=runtime] {\plot};
\addlegendentry{Our approach}

\addplot[color=blue] table[x=t_max, y=greedy_runtime] {\plot};
\addlegendentry{Greedy}
\end{axis}
\end{tikzpicture}
\\
(a) Bolzano network && (b) San Francisco network \\
\end{tabularx}

\vspace*{\pullupcaption}
\caption{Comparison with CLIP}
\label{fig:bozenmap}
\vspace*{\pullupfigure}
\end{figure*}


In the final experiment we show the full strength of our approach for running
queries with a large time constraint $\tmax$. Here we run a blended version of
our algorithm, combining a run time limit of five seconds with a queue length
of half the number of total POIs in the graph. At first glance,
combining a run time limit with a fixed queue length seems redundant. However,
from a certain run time limit onwards, the queue length may become
quite large and we may want to restrict it.

Figure~\ref{fig:bozenmap}(b) shows results for the San Francisco data set,
comparing CLIP to our blended variant. In terms of the score, our approach
outperforms both, CLIP and the greedy heuristic. However, we get a better score
than CLIP with a shorter run time, demonstrating that our technique scales much
better than CLIP. In fact, as we were running the experiments on weaker
hardware compared to the findings in \cite{BHW14}, we could not replicate the
ten-hour itineraries for CLIP shown in \cite{BHW14} in reasonable time. This
clearly shows that our algorithm is much more suitable for deployment on
mobile devices, which have limited computational resources. We also illustrate
that limiting the queue length (to a certain extent) does not have adverse
effects on the run-time-limited variant.

\section{Conclusion and Future Work}
\label{sec:concl}

We have developed an effective and efficient approximation algorithm for
solving the orienteering problem with category constraints (OPMPC) by
applying a best-first strategy, blending it with greedy search, and then
limiting its run time. One major advantage of our technique over CLIP, the
state-of-the-art approach for OPMPC, is the fact that our technique is an
anytime algorithm, which
immediately starts to generate solutions with quality guarantees, as it keeps
track of potential scores. Consequently, we can run our algorithm for a fixed
time or until a certain quality level has been reached, whichever comes first.

Nevertheless, we still see some room for improvement. Profiling our algorithm,
we noticed that we spend a considerable amount of time (about 40\%)
calculating the potential score. If we were able to do this more efficiently,
maybe by parallelizing the task, we would be able to create an even more
efficient algorithm. For longer itineraries it may also be interesting to move
to more efficient techniques for computing itineraries from sets of POIs. 
On a more general level, our approach
could also be viable for other orienteering variants, such as the team
orienteering problem or orienteering with time windows.

\section*{Acknowledgments} The icons in Figure~\ref{fig:ex2} are from
\url{http://icons8.com} and used under the
Creative Commons Attribution-NoDerivs 3.0 Unported (CC BY-ND 3.0) License. To
view a copy of the license, visit 
\url{https://creativecommons.org/licenses/by-nd/3.0/}.

\bibliographystyle{abbrv}
\bibliography{iip}

\end{document}